\documentclass[]{bytedance_seed}

\usepackage[toc,page,header]{appendix}

\usepackage{minitoc}
\usepackage{longtable}
\newcommand{\sd}[1]{\ensuremath{\pm #1}}
\usepackage{wrapfig}
\usepackage{graphicx}
\usepackage{multirow}
\usepackage{multicol}
\usepackage{amsthm}

\newcommand{\Cell}[1]{\begin{tabular}{@{}c@{}}#1\end{tabular}}

\usepackage{float}

\usepackage{wrapfig}
\usepackage{caption}
\usepackage{algorithmic}
\usepackage{algorithm}
\usepackage{subcaption}
\usepackage[utf8]{inputenc} 
\usepackage[T1]{fontenc}    
\usepackage{url}            
\usepackage{booktabs}       
\usepackage{amsfonts}       
\usepackage{nicefrac}       
\usepackage{microtype}      
\usepackage{xcolor}         
\usepackage{amsmath}
\usepackage{amssymb}
\usepackage{mathtools}
\usepackage{slashed}
\usepackage{braket}

\usepackage{enumitem}

\usepackage{cleveref}
\usepackage[colorinlistoftodos,textsize=small]{todonotes}
\usepackage{graphicx}
\usepackage{tikz}
\usepackage{tikz-cd}

\usepackage{bm}
\usepackage{physics}
\usepackage{xcolor}
\usepackage{natbib}
\usepackage{mdframed}
\usepackage{nicefrac}
\usepackage{booktabs}
\usepackage{lipsum}
\usepackage{titlesec}
\usepackage{wrapfig,lipsum,booktabs}
\usepackage{blindtext}
\usepackage[most]{tcolorbox}

\definecolor{darkgreen}{rgb}{0,0.40,0}
\definecolor{firebrick}{rgb}{0.698,0.133,0.133}

\let\cite\citep

\definecolor{Blue}{rgb}{0, 0, 0.8}
\definecolor{blue}{rgb}{0,0,1}

\newtheoremstyle{theoremstyle}
  {.5\baselineskip} %
  {.5\baselineskip} %
  {}                  %
  {}                  %
  {\bfseries}        %
  {.}                 %
  {1em}               %
  {}                  %

\newtheoremstyle{claimstyle}
  {.5\baselineskip}   %
  {.5\baselineskip}   %
  {}                  %
  {}                  %
  {\bfseries}         %
  {.}                 %
  {1em}               %
  {\thmname{{\color{findingrule}#1}}\thmnumber{\ {\color{findingrule}#2}}\thmnote{ {\normalcolor(#3)}}}

\theoremstyle{theoremstyle}

\theoremstyle{claimstyle}
\newtheorem{lemma}{Lemma}[section]
\newtheorem{proposition}{Proposition}
\newtheorem{definition}{Definition}[section]
\theoremstyle{theoremstyle}

\tcolorboxenvironment{theorem}{
  breakable,
  colback=black!10,
  colframe=white,%
  width=\dimexpr\linewidth+10pt\relax,%
  enlarge left by=-5pt,%
  enlarge right by=-5pt,%
  boxsep=5pt,%
  boxrule=0pt,
  left=0pt,right=0pt,top=0pt,bottom=0pt,
  sharp corners,
  before skip=0.5\baselineskip, %
  after skip=0.5\baselineskip,  %
  fonttitle=\bfseries, %
  coltitle=black %
}
\tcolorboxenvironment{remark}{
  blanker,
  breakable,
  before skip=.8\baselineskip,  %
  after  skip=.8\baselineskip   %
}

\tcolorboxenvironment{proposition}{
  breakable,
  colback=findingbg,
  colframe=findingrule,
  boxrule=0.7pt,
  boxsep=6pt,
  left=2pt, right=2pt, top=2pt, bottom=2pt,
  sharp corners,
  fontupper=\itshape,
  fonttitle=\bfseries,
  coltitle=findingrule,
  before skip=0.85\baselineskip,
  after skip=0.85\baselineskip,
}

\tcolorboxenvironment{lemma}{
  breakable,
  colback=findingbg,
  colframe=findingrule,
  boxrule=0.7pt,
  boxsep=6pt,
  left=2pt, right=2pt, top=2pt, bottom=2pt,
  sharp corners,
  fontupper=\itshape,
  fonttitle=\bfseries,
  coltitle=findingrule,
  before skip=0.85\baselineskip,
  after skip=0.85\baselineskip,
}

\tcolorboxenvironment{corollary}{
  breakable,
  colback=black!10,
  colframe=white,%
  width=\dimexpr\linewidth+10pt\relax,%
  enlarge left by=-5pt,%
  enlarge right by=-5pt,%
  boxsep=5pt,%
  boxrule=0pt,
  left=0pt,right=0pt,top=0pt,bottom=0pt,
  sharp corners,
  before skip=0.5\baselineskip, %
  after skip=0.5\baselineskip,  %
  fonttitle=\bfseries, %
  coltitle=black %
}

\tcolorboxenvironment{definition}{
  breakable,
  colback=findingbg,
  colframe=findingrule,
  boxrule=0.7pt,
  boxsep=6pt,
  left=2pt, right=2pt, top=2pt, bottom=2pt,
  sharp corners,
  fontupper=\itshape,
  fonttitle=\bfseries,
  coltitle=findingrule,
  before skip=0.85\baselineskip,
  after skip=0.85\baselineskip,
}
\tcolorboxenvironment{assumption}{
  breakable,
  colback=black!10,
  colframe=white,%
  width=\dimexpr\linewidth+10pt\relax,%
  enlarge left by=-5pt,%
  enlarge right by=-5pt,%
  boxsep=5pt,%
  boxrule=0pt,
  left=0pt,right=0pt,top=0pt,bottom=0pt,
  sharp corners,
  before skip=0.5\baselineskip, %
  after skip=0.5\baselineskip,  %
  fonttitle=\bfseries, %
  coltitle=black %
}

\definecolor{ablnew}{HTML}{000000}   %
\definecolor{newedit}{HTML}{000000}  %
\definecolor{newedit2}{HTML}{000000} %
\definecolor{newedit3}{HTML}{000000} %
\definecolor{fixup}{HTML}{000000}    %
\definecolor{findingbg}{HTML}{EAF3FA}
\definecolor{findingrule}{HTML}{0072B2}
\newcounter{finding}
\newcounter{hypothesis}

\newcommand{\claimbox}[2]{%
  \par\smallskip\noindent
  {\setlength{\fboxrule}{0.7pt}\setlength{\fboxsep}{6pt}%
   \fcolorbox{findingrule}{findingbg}{%
     \begin{minipage}{\dimexpr\linewidth-2\fboxsep-2\fboxrule\relax}
       {\color{findingrule}\bfseries #1.}~#2\par
     \end{minipage}}}%
  \par\smallskip}

\newcommand{\finding}[1]{\refstepcounter{finding}\claimbox{Finding \thefinding}{#1}}

\tcolorboxenvironment{fact}{
  breakable,
  colback=black!10,
  colframe=white,%
  width=\dimexpr\linewidth+10pt\relax,%
  enlarge left by=-5pt,%
  enlarge right by=-5pt,%
  boxsep=5pt,%
  boxrule=0pt,
  left=0pt,right=0pt,top=0pt,bottom=0pt,
  sharp corners,
  before skip=0.5\baselineskip, %
  after skip=0.5\baselineskip,  %
  fonttitle=\bfseries, %
  coltitle=black %
}
\usepackage{cleveref}
\crefname{theorem}{Theorem}{Theorems}
\crefname{equation}{Equation}{Equation}
\crefname{algorithm}{Algorithm}{Algorithm}
\crefname{appendix}{Appendix}{Appendices}
\crefname{proposition}{Proposition}{Propositions}
\crefname{lemma}{Lemma}{Lemmas}
\crefname{corollary}{Corollary}{Corollaries}
\crefname{definition}{Definition}{Definitions}
\crefname{assumption}{Assumption}{Assumptions}
\crefname{remark}{Remark}{Remarks}
\crefname{problem}{Problem}{Problems}
\crefname{property}{Property}{property}
\crefname{section}{Section}{Sections}
\crefname{appendix}{Appendix}{Appendices}
\crefname{proof}{Proof}{Proofs}
\crefname{figure}{Figure}{Figures}
\crefname{table}{Table}{Tables}

\numberwithin{equation}{section}
\numberwithin{theorem}{section}
\numberwithin{definition}{section}
\numberwithin{lemma}{section}
\numberwithin{assumption}{section}
\numberwithin{remark}{section}

\usepackage{makecell}

\titlespacing\section{0pt}{4pt plus 4pt minus 2pt}{-2pt plus 2pt minus 2pt}
\titlespacing\subsection{0pt}{2pt plus 4pt minus 2pt}{-2pt plus 2pt minus 2pt}
\titlespacing\subsubsection{0pt}{2pt plus 4pt minus 2pt}{-2pt plus 2pt minus 2pt}
\usepackage{caption}

\usepackage{titletoc}

\newcommand{\sys}{{\sc EMBER}\xspace}
\newcommand{\syb}{{\sc \textbf{EMBER}}\xspace}

\def\Snospace~{\S{}}

\newcommand{\sref}[2]{\hyperref[#2]{#1 \ref{#2}}}

\title{EMBER: Predicting TCR--pMHC Binding Without a Docked Complex
}

\usepackage{array}
\usepackage{colortbl}

\definecolor{LightCyan}{rgb}{0.8, 0.9, 1}
\newcolumntype{b}{>{\columncolor{LightCyan}\hspace{0pt}}c}
\definecolor{cadetgrey}{rgb}{0.57, 0.64, 0.69}
\newcolumntype{g}{>{\columncolor{cadetgrey}\hspace{0pt}}c}

\author[1, \dagger]{Lin Ha}
\author[1]{Chen Zhou}
\author[1, \dagger]{Haozheng Luo}
\author[1]{Han Liu}
\author[1]{Xiaoyu Zhang}
\author[1, \dagger]{Yan Chen}

\affiliation[1]{Northwestern University}

\contribution[\dagger]{Corresponding authors}

\abstract{
We propose \syb, a structure-based framework for TCR--pMHC binding prediction that eliminates the need for a predicted docked complex, and with it the placement error that dominates such predictions. Traditional structure-based predictors utilize the co-folded complex structure of the receptor and the antigen and interpret binding from its contact surface. Consequently, every downstream feature inherits any deviation the structure-prediction tool introduced. Methodologically, \sys folds the TCR and pMHC independently, encodes them as SE(3)-invariant residue graphs, and compares their learned representations without constructing a shared coordinate frame. A relative pose is therefore not an input of the model, so placement error cannot propagate into the prediction. Computationally, \sys preprocesses $\mathcal{O}(N{+}M)$ monomer folds where a complex-based predictor needs $\mathcal{O}(NM)$ operations. Experimentally, under epitope-disjoint evaluation with six retrained baselines, \sys is first on every metric under all four folding tools, by \textbf{0.9 to 14.4\%} in micro-AUROC, \textbf{18.5 to 25.3\%} in micro-AUPRC and \textbf{10.6 to 24.1\%} in MCC$^{\ast}$. Its ranking is also up to \textbf{85\%} more stable across folding tools. 

}

\date{\today}
\correspondence{Lin Ha at \email{linha2026@u.northwestern.edu}, Haozheng Luo at \email{hluo@u.northwestern.edu}, Yan Chen at \email{ychen@northwestern.edu}}

\begin{document}
\maketitle


\vspace{0.1in}
\section{Introduction}
\vspace{0.1in}
\label{sec:intro}
We propose \syb, a structure-based predictor of TCR--pMHC binding that \textbf{never forms a docked complex}. It folds each molecule once as an unbound monomer, encodes the two independently, and scores the pair by comparing SE(3)-invariant descriptors, so the relative pose of the two molecules never enters the computation. Because TCR--pMHC binding determines adaptive immune specificity \citep{garcia2005sees} while the candidate space is far too large for exhaustive experimental screening, \sys provides a scalable way to prioritize likely binders, including for unseen epitopes.

Existing structure-based TCR--pMHC predictors make this decision through a predicted docked complex, first folding or docking the two molecules into a common assembly and then inferring binding from the predicted interface, using residue contacts, geometric features, or structure-derived graphs \citep{slone2025stag, slone2025stagllm, li2025sagetcr, abeer2026structure, polster2026t2pmhc, siriarchawatana2026tcrlens}. The design follows the biology, since recognition happens at the interface and sequence alone does not explain specificity \citep{garcia2005sees, lagruta2018drivers}. However, it also makes the downstream predictor dependent on the accuracy of the upstream complex prediction. In particular, these methods typically retain only residues within a fixed distance of the opposing molecule \citep{slone2025stag, li2025sagetcr, abeer2026structure}. The predicted relative pose therefore determines not only the geometric features presented to the model, but also which residues and intermolecular contacts exist in its input graph. An error in rigid-body placement can thus change the input itself, rather than merely perturb its coordinates. Although \citet{polster2026t2pmhc} attribute their model's limitations to structural prediction error, whether predicted TCR--pMHC complexes are reliable in these task-critical quantities has not been systematically tested. 

\begin{figure}[htp]
\centering
\vspace{-0.2in}
\includegraphics[width=\linewidth]{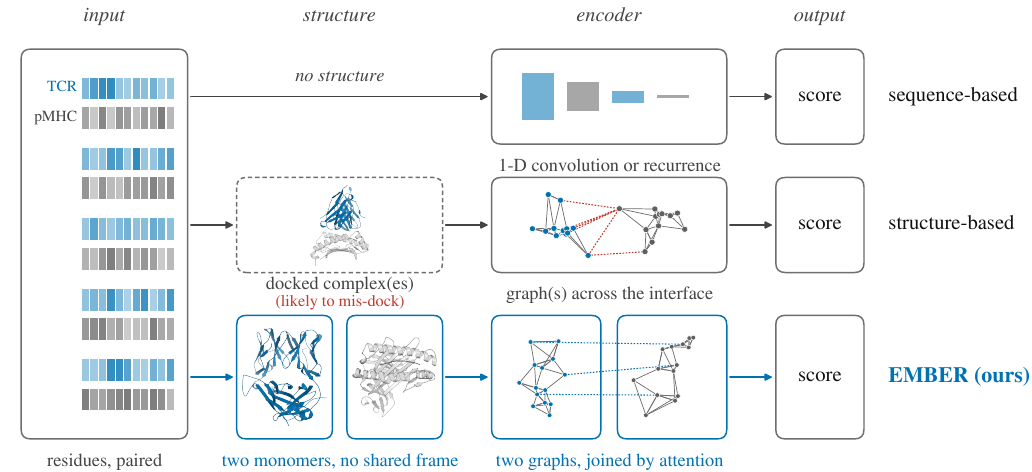}
\caption{\textbf{Three routes to binding prediction.} All methods score the same molecular pair. Sequence-based methods use no structural information; structure-based methods dock the molecules and construct a joint interface graph with pose-dependent contacts; \sys instead folds each molecule independently and connects the two monomer graphs only through attention.}
\label{fig:pipelines}
\vspace{-0.1in}
\end{figure}

To address this limitation, we propose \syb, which removes complex assembly rather than correcting its errors. Each molecule is folded as an unbound monomer and represented as a residue graph, with C$\alpha$-centered nodes carrying local geometric descriptors and cached protein language-model embeddings. Edges connect nearest neighbors and encode spatial separation and biochemical compatibility. Because all input features are invariant to rigid-body transformations, no absolute coordinates or inter-molecular pose enter the representation. Given these pose-independent graphs, a shared equivariant encoder processes the two molecules separately, restricting message passing to within each molecule and updating neighborhoods from the evolving coordinates at each layer. Cross-molecule interaction occurs only through bidirectional attention between the CDR3 loops and peptide, whose residue sets are defined by sequence annotation rather than predicted geometry. Thus, \sys neither constructs a shared coordinate frame nor selects residues using inter-molecular distance cutoffs. Pose independence is therefore imposed by the model architecture and is preserved throughout training.

\textbf{Contributions.} We present \syb (\cref{fig:pipelines}), a monomer-only TCR--pMHC structure-based predictor which is provably independent of the relative pose of the two molecules. Our main contributions are:

\begin{itemize}[leftmargin=*]
\item We identify \textbf{assembly-induced interface corruption (AIC)}, a previously uncharacterized failure mode in structure-based TCR--pMHC prediction, where pose errors corrupt the residues and contacts exposed to downstream predictors.
    \item We propose \sys, a pose-free structure-based predictor that folds the TCR and pMHC independently as unbound monomers. Each molecule is encoded as an SE(3)-invariant residue graph with a shared encoder, and binding is scored by bidirectional attention over the resulting representations. 
    \item \sys preprocesses $\mathcal{O}(N{+}M)$ monomer folds, against the $\mathcal{O}(NM)$ complex predictions a complex-based predictor needs.
    \item Experimentally, \sys leads six published baselines retrained on identical data and splits, first on every metric under all four folding tools. It is also significantly more stable to the choice of folding tool.

\end{itemize}


\vspace{0.05in}
\section{Related Work}
\vspace{0.1in}
\label{sec:related}
Existing approaches to TCR--pMHC binding prediction fall broadly into \textbf{sequence-based} and \textbf{structure-informed} families. Sequence models, including NetTCR \citep{montemurro2021nettcr, jensen2024nettcr22} and ERGO-II \citep{springer2021ergo2}, learn directly from amino-acid sequences and benefit from substantially larger sequence datasets than are available for solved TCR--pMHC complexes \citep{vita2015iedb,shugay2018vdjdb, leem2018stcrdab}. However, recent benchmarking and systematic evaluation show that generalization to unseen epitopes remains limited \citep{lu2026assessment, deng2026benchmarking}. TEIM predicts the residue contacts themselves from sequence and HeteroTCR builds a heterogeneous graph over sequence embeddings, so neither reads a structure \citep{peng2023teim, yu2024heterotcr}.
Advances in protein structure prediction enabled a second family of methods that incorporates predicted three-dimensional structure \citep{jumper2021alphafold, abramson2024alphafold3}. Most construct a predicted TCR--pMHC complex and infer binding from its interface using residue contacts, geometric features, and graph representations \citep{bradley2023tcrdock,slone2025stag, slone2025stagllm, li2025sagetcr, polster2026t2pmhc, siriarchawatana2026tcrlens}. These methods typically define the interface by a distance cutoff of 8 to 14~\AA{}, so the predicted pose determines not only residue geometry but also which residues and intermolecular contacts enter the model. Other approaches reduce this dependence by using the structure of only one binding partner \citep{zhao2023deepair,ji2024catcr, ge2025trap,shang2026imprint} or by using structure only during training \citep{li2026tcrsrim}, but do not exploit the structural information of both molecules. Prior work has already reported unreliable docking confidence \citep{deleuran2025nettcrstruc} and that non-interacting complexes adopt conformations geometrically similar to interacting ones \citep{robben2026aipredicted}.

\sys occupies a distinct point in this design space. STAG, SageTCR and the classifier of \citet{abeer2026structure} identify only the residues within 10 to 14~\AA{} of the opposing molecule as the interface \citep{slone2025stag,li2025sagetcr}. STAG-LLM keeps every residue and wires the assembly with a 10~\AA{} radius graph \citep{slone2025stagllm}, so there the pose decides the contacts rather than the residues. \sys processes each monomer in full and links the two only through sequence-annotated residue sets, so neither choice is exposed to the pose. TRAP avoids the assembly by dropping the receptor structure and representing the TCR by sequence \citep{ge2025trap}, while \sys keeps the structure of both. caTCR \citep{ji2024catcr} also folds its two inputs separately, though it folds only the CDR3$\beta$ loop and the epitope and leaves the MHC out of the model. It reduces each to a matrix of pairwise C$\alpha$ distances and reads that matrix with a convolutional network, whose neighborhoods follow the residue order rather than the spatial arrangement. \sys carries coordinates as an equivariant field and updates them at every layer. Running an E($n$)-equivariant network over the joint graph, as \citet{abeer2026structure} does, leaves the score invariant to motion of the whole assembly while the relative pose stays in the input. In contrast, \sys removes the relative pose by \cref{prop:pose}.

\vspace{0.05in}

\section{Assembly-induced Interface Corruption}
\vspace{0.1in}
\label{sec:analysis}
\begin{figure}[ht]
\centering
\includegraphics[width=\linewidth]{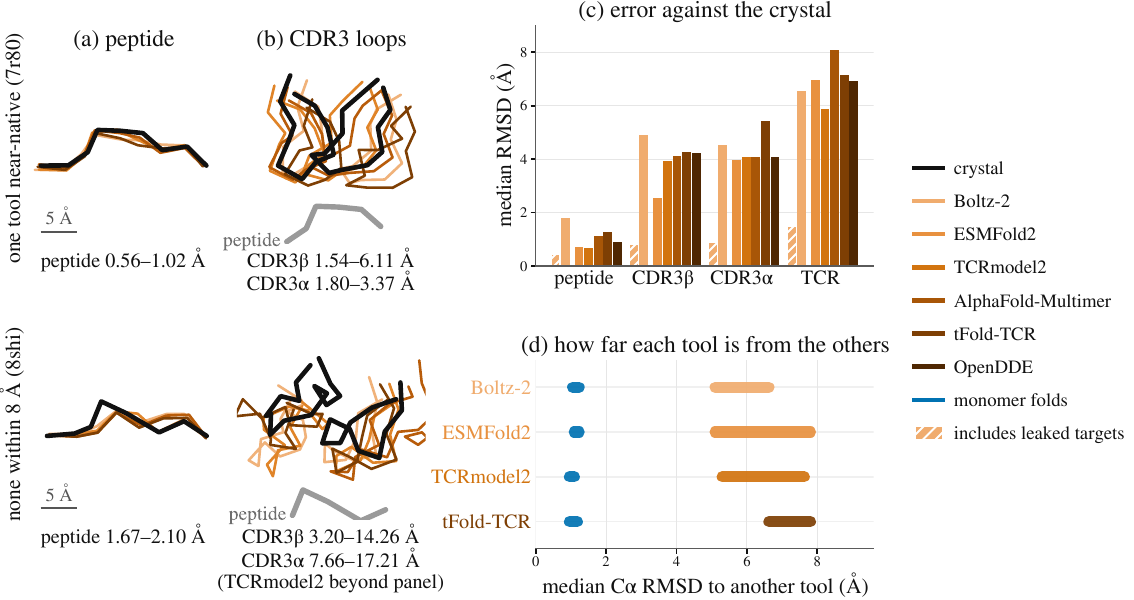}
\vspace{-0.1in}
\caption{\textbf{The predicted interface, and where its error lives.} \textbf{(a, b)} Two targets in the plane of the peptide, with the crystal drawn in black. The upper one lies inside Boltz-2's training window, the lower one is an example of generalization failure where no tool places the receptor within 8~\AA{}. \textbf{(c)} Median C$\alpha$ RMSD after superposing on the MHC heavy chain. Boltz-2 appears twice, hatched over all 49 targets and solid over the 10 released after its training cutoff. \textbf{(d)} Each bar spans one tool's pairings with the others, each tool's own color for receptor placement on the 49 complexes and blue for whole-monomer agreement over the same 250 TCR monomers. AlphaFold-Multimer and OpenDDE have no monomer folds. TCRmodel2 is the TCR-specialized version of AlphaFold-Multimer, and OpenDDE shares its AlphaFold3 architecture with Boltz-2, so each is already represented.}
\vspace{-0.2in}
\label{fig:interface}
\end{figure}

In this section, we test whether predicted TCR--pMHC complexes provide reliable structural inputs for downstream binding prediction. We evaluate six co-folding tools on 49 class-I TCR--pMHC crystal structures from STCRDab \citep{leem2018stcrdab}. The tools are Boltz-2 \citep{passaro2025boltz2}, ESMFold2 \citep{candido2026esm}, TCRmodel2 \citep{yin2023tcrmodel2}, AlphaFold-Multimer \citep{evans2021multimer}, OpenDDE \citep{aureka2026opendde} and tFold-TCR \citep{wu2025tfold}. We focus on two quantities consumed by structure-based predictors: the intermolecular contacts at the interface and the relative placement of the two molecules.

We term \textbf{assembly-induced interface corruption (AIC)} the failure mode in which rigid-body placement error changes the interface presented to the downstream model. Because traditional structure-based models construct interface graphs using distance cutoffs of 8--14~\AA{}, an error of comparable magnitude can alter which residues and intermolecular contacts enter the graph, producing misleading information. In short, we found that no tool reconstructs the native interface reliably. Setting Boltz-2 aside, the median contact Jaccard at 5~\AA{} runs from 0.188 to 0.459 and the median receptor C$\alpha$ RMSD from 5.87 to 8.09~\AA{}. Boltz-2 is best in almost every row and reaches a Jaccard of 0.833, but 39 of the 49 targets were seen by it during training. On the 10 targets released after Boltz-2's training cutoff, its Jaccard falls from 0.833 to 0.155. The other five tools move far less on those same targets, ESMFold2 from 0.459 to 0.393 and AlphaFold-Multimer from 0.308 to 0.303. Therefore the lead of Boltz-2 reflects memorization rather than generalization ability. \Cref{tab:folding} and \cref{tab:folding-clean} list the details.

More specifically, we locate the error in assembly rather than in monomer folding. Peptides remain close to their crystal positions, whereas CDR3 loops and the receptor as a whole are displaced substantially further (\cref{fig:interface}c). Across tools, receptor placements differ by 5.10 to 7.80~\AA{} on the same complexes, while independently folded TCR monomers agree to within 0.96 to 1.23~\AA{} over the same 250 TCR monomers (\cref{fig:interface}d). Even the two tools that agree most on placement sit 5.10~\AA{} apart, four times the widest monomer disagreement. The tools largely agree on the individual molecular structures and disagree on how to assemble them.

Most co-folding tools are stochastic, so one natural solution is to fold each complex $K$ times and average the error away. Write the AIC of sample $k$ as a displacement vector $v_k$ and define \emph{directional coherence} $\gamma=\big\lVert\tfrac{1}{K}\sum_{k=1}^{K} v_k/\lVert v_k\rVert\big\rVert\in[0,1]$, which is $1$ when every sample misses the same way. When the displacements have similar length, averaging $K$ samples leaves at least a fraction $\gamma$ of the error, so measuring $\gamma$ shows what sampling can achieve. Five tools fold each target $20$ times and reach $\gamma$ of $0.84$ to $0.992$, four times the $0.21$ expected from random directions (tFold-TCR is deterministic and offers no distribution to sample). Equal-weight averaging accordingly moves the displacement by $-2.5\%$ to $+6.7\%$. Moreover, this bias is not specific to one tool. In a common frame defined by the peptide-binding groove, five independently built tools share one displacement direction, with a cross-tool $\gamma$ of $0.621$ and agreement beyond random on $40$ of $48$ targets ($p=3.3\times10^{-6}$). Therefore, we claim that assembly-induced interface corruption is a systematic bias.

\finding{The dominant structural error in predicted TCR--pMHC complexes lies in rigid-body assembly. It is large enough to change which residues and contacts enter a downstream graph, and its direction is systematic, so multiple sampling and ensembling cannot remove it.}

\vspace{0.05in}

\section{Method}
\vspace{0.1in}
\label{sec:method}
\begin{figure}[htp]
\centering
\vspace{-0.1in}
\includegraphics[width=\linewidth]{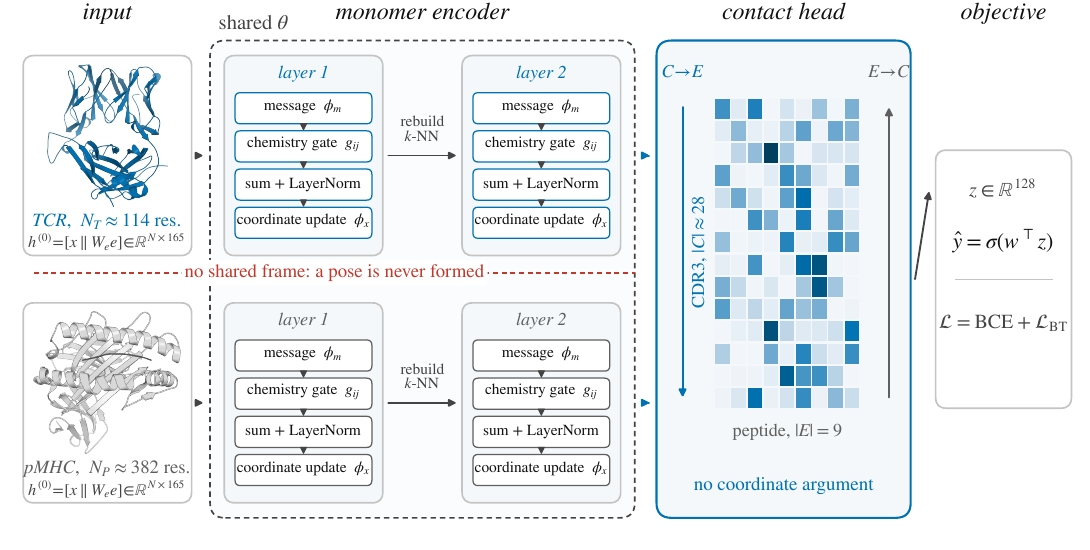}
\vspace{-0.3in}
\caption{\textbf{\sys.} The top row is the TCR and the bottom row the pMHC. Tensor shapes are annotated on the flow, and both encoder layers are drawn. In the contact head, $C$ indexes the CDR3$\beta$ residues and $E$ the peptide residues, so the map carries one row per CDR3$\beta$ residue and one column per peptide residue. Its values are schematic. The dashed box encloses both towers because they share weights and exchange no message.}
\label{fig:method}
\vspace{-0.05in}
\end{figure}

We propose \syb (\cref{fig:method}), a framework for TCR--pMHC binding prediction that folds each molecule independently and represents it as a residue graph with SE(3)-invariant features. A shared-weight encoder processes the two graphs separately. A bidirectional attention head then compares and pools their residue representations, and a linear layer produces the binding score. \Cref{alg:ember} states the forward pass as pseudocode, and \cref{tab:notation} introduces the symbols in detail.

\textbf{Problem Definition.}
\sys represents the receptor and antigen as residue graphs, $\mathcal{G}^{T}$ and $\mathcal{G}^{P}$. The receptor graph carries both TCR chains, and the encoder is given the $\beta$ chain alone, $N_T \approx 114$ nodes (\cref{app:scope}). The antigen graph holds the MHC heavy chain, $\beta_2$-microglobulin and the peptide, $N_P \approx 382$ nodes. Each residue $i$ is described by its C$\alpha$ coordinate $p_i \in \mathbb{R}^3$, an invariant local descriptor $x_i \in \mathbb{R}^{37}$, and a frozen ESM-C embedding $e_i \in \mathbb{R}^{1152}$ \citep{candido2026esm}. The descriptor $x_i$ depends only on local geometry and residue identity. It is therefore invariant to rigid motion, and it introduces no absolute coordinate and no preferred axis. The encoder input is $h^{(0)}_i = [x_i \,\|\, W_e e_i] \in \mathbb{R}^{165}$ and $p^{(0)}_i = p_i$, with $W_e \in \mathbb{R}^{128\times1152}$ learned. Graph edges connect the $k=30$ nearest neighbors in C$\alpha$ distance $d_{ij}=\lVert p_i-p_j\rVert$. Each edge carries $a_{ij} = [\rho(d_{ij}) \,\|\, d_0/d_{ij} \,\|\, b_{ij}] \in \mathbb{R}^{22}$, where $\rho$ is a 16-dimensional radial basis over $[0,20]$~\AA{} and $d_0 = 4$~\AA{}. The five entries of $b_{ij}$ read residue identity alone: hydrogen bonding, hydrophobicity, van der Waals radius, charge product, and aromaticity.

\textbf{Monomer Encoder.}
A single encoder $\theta$ runs on the two graphs without any message between them. It first projects $h^{(0)}_i$ to the working width $D=256$. Every layer then repeats the same four steps in order, a message on each edge, a chemistry gate on that message, a node update and a coordinate update, and closes by rebuilding the neighborhood from the new coordinates. Writing $\mathcal{N}(i)$ for the neighbors of $i$, $\sigma$ for the logistic sigmoid, $\mathrm{LN}$ for layer normalization and $c_{ij}=[\,h^{(\ell-1)}_i \,\|\, h^{(\ell-1)}_j \,\|\, a_{ij}\,]$ for the edge context, layer $\ell$ maps $(h^{(\ell-1)},p^{(\ell-1)})\mapsto(h^{(\ell)},p^{(\ell)})$ by
\begin{align}
m_{ij} &= \phi_m(c_{ij}), &
g_{ij} &= \sigma\big(\phi_g(b_{ij})\big)\in(0,1), \label{eq:msg}\\[2pt]
h^{(\ell)}_i &= \mathrm{LN}\Big(h^{(\ell-1)}_i+\!\!\sum_{j\in\mathcal{N}(i)}\! g_{ij}\,m_{ij}\Big), &
p^{(\ell)}_i &= p^{(\ell-1)}_i+\frac{1}{|\mathcal{N}(i)|}\!\!\sum_{j\in\mathcal{N}(i)}\!\!\big(p^{(\ell-1)}_i-p^{(\ell-1)}_j\big)\,\phi_x(c_{ij}), \label{eq:upd}
\end{align}
after which the edge set is rebuilt from the updated coordinates as $\mathrm{kNN}(p^{(\ell)},k)$. This is the E($n$)-equivariant layer of \citet{satorras2021egnn} with two changes. First, the gate $g_{ij}$ reads residue identity only. Chemistry therefore modulates how much geometry is passed, and equivariance is untouched. Second, the graph is rebuilt at every layer. Coordinates reach $h$ only through $d_{ij}$ inside $a_{ij}$, so the updated coordinates would never influence $h$ without the rebuild.

\begin{algorithm}[t]
\caption{\sys forward pass.}
\label{alg:ember}
\small
\begin{algorithmic}[1]
\REQUIRE monomer graphs $\mathcal{G}^{T},\mathcal{G}^{P}$; encoder $\theta$; depth $L$; neighbors $k$
\FOR{$\mathcal{G}\in\{\mathcal{G}^{T},\mathcal{G}^{P}\}$}
  \STATE $h^{(0)}\leftarrow[\,x\,\|\,W_e e\,]$; \; $p^{(0)}\leftarrow$ C$\alpha$ coordinates \COMMENT{shared $\theta$, no message between graphs}
  \FOR{$\ell=1$ \TO $L$}
    \STATE $\mathcal{K}^{(\ell-1)}\leftarrow\mathrm{kNN}(p^{(\ell-1)},k)$; \; $h^{(\ell)},p^{(\ell)}\leftarrow\mathrm{EGCL}_{\theta}\big(h^{(\ell-1)},p^{(\ell-1)},\mathcal{K}^{(\ell-1)}\big)$ \COMMENT{Eq.~\ref{eq:msg}--\ref{eq:upd}}
  \ENDFOR
  \STATE $H\leftarrow h^{(L)}$ stacked by residue
\ENDFOR
\STATE $C\leftarrow$ CDR3 rows of $H^{T}$; \; $E\leftarrow$ peptide rows of $H^{P}$ \COMMENT{invariant scalars only}
\STATE $z_t\leftarrow\mathrm{proj}_1(\mathrm{pool}_{C}(\mathrm{MHA}_1(H^{T}_{C},H^{P}_{E})))$; \; $z_p\leftarrow\mathrm{proj}_2(\mathrm{pool}_{E}(\mathrm{MHA}_2(H^{P}_{E},H^{T}_{C})))$
\STATE $z\leftarrow\mathrm{LN}(W_z[\,z_t\,\|\,z_p\,])$
\RETURN $\hat{y}=\sigma(u)$, \; $u=w^{\top}z$ \COMMENT{no coordinate enters the contact head}
\end{algorithmic}
\end{algorithm}

\textbf{Contact Head.}
The two molecules meet in the bidirectional contact head that takes no coordinate argument. Write $H^{T}\in\mathbb{R}^{N_T\times D}$ and $H^{P}\in\mathbb{R}^{N_P\times D}$ for the final states $h^{(L)}$ stacked by residue. Let $C$ index the CDR3$\beta$ rows of the receptor and $E$ the peptide rows of the antigen. Both index sets come from sequence annotation rather than from the predicted geometry, so they cannot be altered by AIC, unlike a distance cutoff. Both directions use the standard attention of \citet{vaswani2017attention} with four heads of width $32$. \textbf{TCR to peptide}, $\mathrm{MHA}_1(H^{T}_{C},H^{P}_{E})$, lets the CDR3$\beta$ rows query the peptide. \textbf{Peptide to TCR}, $\mathrm{MHA}_2(H^{P}_{E},H^{T}_{C})$, lets the peptide rows query the CDR3$\beta$ rows. Their per-query outputs are pooled over $C$ and over $E$ and projected, giving $z_t=\mathrm{proj}_1(\mathrm{pool}_{C}(\cdot))$ and $z_p=\mathrm{proj}_2(\mathrm{pool}_{E}(\cdot))$, both in $\mathbb{R}^{128}$. The pair is scored from their concatenation, $z=\mathrm{LN}(W_z[\,z_t\|z_p\,])\in\mathbb{R}^{128}$, logit $u=w^{\top}z$, and $\hat{y}=\sigma(u)$.

TCR to peptide pools over $C$ rather than over $E$, so peptide length enters only through the attention over keys and never sets the pooled dimension. It carries little receptor-side signal of its own, and it needs no canonical frame. In contrast, Peptide to TCR needs a canonical frame, because peptide length ranges from 8 to 11 residues in our dataset and pooling over raw positions would give position $k$ a different chemical role. It therefore pools on a canonical frame of $S=9$ slots. Class-I binding pins the peptide at both ends regardless of length, the free N-terminus in the A pocket, the second residue P2 in the B pocket and the C-terminal residue P$\Omega$ in the F pocket \citep{madden1995pmhc,szeto2021rulemakers}. A monotone map $\pi:\{1,\dots,|E|\}\to\{1,\dots,S\}$ sends those three to slot~1, slot~2 and slot~$S$, and spreads the interior residues over the remaining slots, placing surplus residues and gaps centrally, where a longer peptide bulges away from the groove \citep{tynan2005superbulged}. An 8-mer and an 11-mer therefore present the same anchor at the same slot, and length is absorbed where class-I presentation already tolerates it. Writing $O$ for the per-residue output of $\mathrm{MHA}_2$, $\mathrm{pool}_{E}(O)=\big[\,q_1\|\cdots\|q_S\,\big],
q_s=\big[\,\mathrm{mean}_{\pi(i)=s}\,O_i \;\big\|\; \max_{\pi(i)=s}\,O_i\,\big]$, and $\mathrm{proj}_2:\mathbb{R}^{2S\cdot 128}\to\mathbb{R}^{128}$.

\textbf{Objective Function.} Let a batch $\mathcal{B}=\mathcal{B}^{+}\cup\mathcal{B}^{-}$ hold binders $\mathcal{B}^{+}$ and non-binders $\mathcal{B}^{-}$, and let $y_i\in\{0,1\}$ be the label of pair $i$. \sys is trained on $\mathcal{L}=\mathcal{L}_{\mathrm{BCE}}+\mathcal{L}_{\mathrm{BT}}$ with
\begin{align*}
\mathcal{L}_{\mathrm{BCE}} &= -\frac{1}{|\mathcal{B}|}\sum_{i\in\mathcal{B}}\big[y_i\log\hat{y}_i+(1-y_i)\log(1-\hat{y}_i)\big], &\!\!\!\!\!
\mathcal{L}_{\mathrm{BT}} &= -\frac{1}{|\mathcal{B}^{+}||\mathcal{B}^{-}|}\!\!\!\sum_{a\in\mathcal{B}^{+}\!,\,b\in\mathcal{B}^{-}}\!\!\!\log\sigma(u_a-u_b).
\end{align*}
The first term calibrates the per-pair probability. The second ranks candidates within the batch \citep{bradley1952rank}. See \cref{app:training} for detailed hyperparameters.

\vspace{0.1in}
\subsection{Theoretical Analysis}
\label{sec:pose}
\vspace{0.1in}
A rigid motion $(R,t)$ acts on a residue graph by $p_i\mapsto Rp_i+t$, leaving residue identities untouched.

\begin{definition}[encoder inputs]
\label{def:inputs}
The encoder input $h^{(0)}$, the edge features $a$ and the $k$-NN edge set are unchanged by $(R,t)$.
\end{definition}
\begin{lemma}[one layer]
\label{lem:layer}
If $h^{(\ell-1)}$ is unchanged by $(R,t)$ and $p^{(\ell-1)}\mapsto Rp^{(\ell-1)}+t$, then layer $\ell$ leaves $h^{(\ell)}$ unchanged and sends $p^{(\ell)}\mapsto Rp^{(\ell)}+t$.
\end{lemma}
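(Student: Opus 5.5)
The plan is a direct induction-free verification of one EGCL layer, following the order of computation in \cref{eq:msg,eq:upd}. Write primes for quantities computed from the transformed inputs $(h^{(\ell-1)},\, Rp^{(\ell-1)}+t)$. The first step is the geometric observation on which everything else rests: pairwise distances are preserved, since
\[
\lVert (Rp_i+t)-(Rp_j+t)\rVert=\lVert R(p_i-p_j)\rVert=d_{ij}
\]
for orthogonal $R$. Consequently the $k$-NN set built from $p^{(\ell-1)}$ is unchanged, so $\mathcal{N}'(i)=\mathcal{N}(i)$ and $|\mathcal{N}'(i)|=|\mathcal{N}(i)|$. The same identity shows that $\rho(d_{ij})$ and $d_0/d_{ij}$ are unchanged. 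The chemistry block $b_{ij}$ reads residue identity only, which the rigid motion leaves untouched. Hence $a'_{ij}=a_{ij}$, consistent with \cref{def:inputs} at layer $0$.

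The second step handles the invariant channel. Since $h^{(\ell-1)}$ is unchanged by hypothesis and $a_{ij}$ is unchanged, the edge context $c'_{ij}=c_{ij}$. Therefore $m'_{ij}=\phi_m(c_{ij})=m_{ij}$ and $g'_{ij}=g_{ij}$, where the gate depends only on $b_{ij}$. The sum over $\mathcal{N}(i)$ is taken over the same index set, so its argument agrees term by term. Applying the deterministic layer normalization then gives $h'^{(\ell)}_i=h^{(\ell)}_i$.

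The third step is the coordinate update. Differences transform as $p'_i-p'_j=R(p_i-p_j)$, so translation cancels. The scalar weights satisfy $\phi_x(c'_{ij})=\phi_x(c_{ij})$ by step two. Hence
\[
p'^{(\ell)}_i=Rp_i+t+\tfrac{1}{|\mathcal{N}(i)|}\textstyle\sum_j R(p_i-p_j)\phi_x(c_{ij}),
\]
and pulling $R$ out of the sum by linearity gives $Rp^{(\ell)}_i+t$. Because the layer closes with the rebuild $\mathrm{kNN}(p^{(\ell)},k)$, the same distance argument from step one shows that the next edge set is also unchanged. This supplies exactly the hypothesis needed to chain the lemma across layers.

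The main obstacle is the rebuilt $k$-NN graph: kNN is defined via distances, but ties could be broken by index or by coordinates in an implementation. I would handle this by stating that neighbor selection depends only on the multiset of distances together with a fixed tie-break on residue index, which is pose-independent. Alternatively, I would note that ties occur on a measure-zero set. Everything else is routine algebra using $R^\top R=I$.
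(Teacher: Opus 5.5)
Your proposal is correct and follows the paper's proof essentially step for step: distance preservation leaves $a_{ij}$ and the $k$-NN set unchanged, so $c_{ij}$, $m_{ij}$, $g_{ij}$ and hence $h^{(\ell)}$ are unchanged; the translation then cancels in each difference and $R$ comes out of the sum to give $Rp^{(\ell)}_i+t$; and the same distance argument covers the rebuilt edge set. Your remark on ties makes explicit an assumption the paper leaves implicit (it only treats near-ties numerically, in its invariance-gate paragraph), and a fixed tie-break on residue index is the right way to close it, since the measure-zero observation alone would not give the lemma for every configuration.
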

\vspace{-0.1in}
\begin{proof}
See \cref{app:lemma2}.
\end{proof}
\vspace{-0.1in}
\begin{proposition}[Pose independence]
\label{prop:pose}
Let $s(\mathcal{G}^{T},\mathcal{G}^{P})=\hat{y}$ be the binding score. For any rotations $R_1,R_2\in SO(3)$ and translations $t_1,t_2\in\mathbb{R}^{3}$ applied \emph{independently} to the two molecules,
\begin{align*}
s\big(R_1\mathcal{G}^{T}\!+t_1,\;R_2\mathcal{G}^{P}\!+t_2\big)=s\big(\mathcal{G}^{T},\mathcal{G}^{P}\big).
\end{align*}
\end{proposition}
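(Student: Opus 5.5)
The plan is to treat the two towers separately and then use that the contact head reads only their final scalar states. Fix a molecule $\mathcal{G}\in\{\mathcal{G}^{T},\mathcal{G}^{P}\}$ and a rigid motion $(R,t)$ with $R\in SO(3)$. By \cref{def:inputs}, the transformed graph $R\mathcal{G}+t$ has the same $h^{(0)}$, the same edge features $a$ and the same initial $k$-NN edge set $\mathcal{K}^{(0)}$; its coordinates are $p^{(0)}\mapsto Rp^{(0)}+t$. This is the base case of an induction on the layer index $\ell$. For the step I will invoke \cref{lem:layer}: if $h^{(\ell-1)}$ is unchanged and $p^{(\ell-1)}$ is moved rigidly, then $h^{(\ell)}$ is unchanged and $p^{(\ell)}\mapsto Rp^{(\ell)}+t$.

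One point must be checked explicitly so the induction closes, namely the neighborhood rebuild. Since $\lVert (Rp_i+t)-(Rp_j+t)\rVert=\lVert p_i-p_j\rVert$, all pairwise distances among the $p^{(\ell)}$ are preserved. Hence $\mathrm{kNN}(p^{(\ell)},k)$ is the same edge set before and after the motion, with ties broken by a fixed residue-index rule. The edge features $\rho(d_{ij})$ and $d_0/d_{ij}$ are likewise unchanged, and $b_{ij}$ depends only on residue identity. So the hypotheses of \cref{lem:layer} hold again at layer $\ell+1$. If the lemma as stated already includes the rebuild, this is just a remark. Otherwise it is the one place where I would add an argument, and it is the step I expect to be the main obstacle, mostly as a matter of bookkeeping on ties. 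After $L$ layers, $H^{T}=h^{(L)}$ of the receptor is identical under $(R_1,t_1)$, and $H^{P}$ of the antigen is identical under $(R_2,t_2)$.

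Independence of the two motions comes from the architecture. The shared encoder $\theta$ passes no message between the graphs, so the computation on $\mathcal{G}^{T}$ involves only receptor coordinates and the computation on $\mathcal{G}^{P}$ only antigen coordinates. We may therefore apply the induction to each tower with its own $(R_i,t_i)$; no relative quantity such as $R_1p-R_2p'$ is ever formed. The tower outputs $H^{T}$ and $H^{P}$ are the same as before the motions.

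Finally, the index sets $C$ and $E$ come from sequence annotation, and the slot map $\pi$ depends only on peptide length, so neither depends on coordinates. The contact head is a deterministic function of $H^{T}_{C}$ and $H^{P}_{E}$ alone: two attention blocks, the pooling $\mathrm{pool}_C$ and $\mathrm{pool}_E$, then $\mathrm{proj}_{1,2}$, $W_z$, $\mathrm{LN}$, $w$ and $\sigma$. It has no coordinate argument, as recorded in \cref{alg:ember}. Its inputs are unchanged, so $z$, $u$ and $\hat{y}$ are unchanged, which gives $s(R_1\mathcal{G}^{T}+t_1,R_2\mathcal{G}^{P}+t_2)=s(\mathcal{G}^{T},\mathcal{G}^{P})$.
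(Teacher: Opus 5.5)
Your proposal is correct and matches the paper's proof essentially step for step. The paper likewise takes \cref{def:inputs} as the base case, inducts over the $L$ layers with \cref{lem:layer}, uses the absence of cross-tower messages to let $(R_1,t_1)$ and $(R_2,t_2)$ act independently, and notes that the head reads only $H$ and the sequence-fixed masks. The $k$-NN rebuild you flag is already covered inside the paper's proof of \cref{lem:layer}, so that check is redundant here; your tie-breaking remark is a small precision the paper's proof leaves implicit.
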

\vspace{-0.1in}
\begin{proof}
See \cref{app:prop}.
\end{proof}
\vspace{-0.1in}

\sys therefore never takes in the relative pose, so AIC cannot mislead the result.

\vspace{0.05in}

\section{Experiments}
\vspace{0.1in}
\label{sec:exp}
We evaluate \sys against six published baselines on a curated set of four public datasets, over five epitope-disjoint folds and three seeds. 

\textbf{Dataset.} We utilize four public datasets---McPAS-TCR, VDJdb, IEDB, and 10x Genomics---covering 45,909 TCR--pMHC pairs across 343 epitopes~\citep{vita2015iedb,tickotsky2017mcpas,shugay2018vdjdb,tenx2019immunity}. Following \citet{slone2025stag}, 7,371 measured binders are augmented with sampled non-binders, yielding a binder prevalence of 0.161.

\textbf{Baselines.} To evaluate \sys against diverse TCR--pMHC predictors, we compare with six baselines spanning three structural, one hybrid, and two sequence-based methods. Structural baselines include STAG-LLM~\citep{slone2025stagllm}, SageTCR~\citep{li2025sagetcr}, and the dual-encoder classifier of \citet{abeer2026structure}; TRAP~\citep{ge2025trap} combines pMHC structure with TCR sequence, whereas NetTCR-2.2~\citep{jensen2024nettcr22} and ERGO-II~\citep{springer2021ergo2} read no structure. We retrain each method using its published hyperparameters across five folds and three random seeds, reporting means over all 15 runs and standard deviations across seeds.

\textbf{Metrics.} To evaluate the performance of \sys, we measure ranking quality, classification accuracy, and cross-tool stability. Ranking is assessed by micro- and macro-averaged AUROC and AUPRC; classification uses MCC at a fixed threshold of 0.5, reported alongside MCC at the threshold that maximizes it on the test fold, which no deployed system has and which we report only to bound what threshold tuning could buy a baseline. Stability is quantified by pairwise Kendall's $\tau_b$, Kendall's concordance $W$ across all four folding tools, and Overlap@$k$ between top-ranked candidates. \cref{app:metrics} provides the definitions.

\vspace{0.2in}
\begin{table}[htp]
\centering
\vspace{-0.25in}
\footnotesize\setlength{\tabcolsep}{2.6pt}
\caption{\textbf{Binding prediction under epitope-disjoint five-fold cross-validation}, means over five folds and three seeds, with the standard deviation over seeds. \emph{structures} names the tool that produced the input, and within each group the best value per column is \textbf{bold} and the second best is \underline{underlined}. The sequence baselines read no structure and are not ranked. \sys is first in every cell, ahead of the second by \textbf{6.1 to 21.7\%} in macro-AUROC, \textbf{13.6 to 22.0\%} in macro-AUPRC, \textbf{0.9 to 14.4\%} in micro-AUROC, \textbf{18.5 to 25.3\%} in micro-AUPRC, \textbf{51.2 to 74.5\%} in MCC@0.5 and \textbf{10.6 to 24.1\%} in MCC$^{\ast}$. MCC$^{\ast}$ is MCC at the threshold that maximizes it on the test fold. \Cref{app:perfold} gives the per-fold values.}
\label{tab:main}
\vspace{-0.05in}
\resizebox{\textwidth}{!}{
\begin{tabular}{lcccccccc}
\toprule
 family & structures & method & macro-AUROC & macro-AUPRC & micro-AUROC & micro-AUPRC & MCC@0.5 & MCC$^{\ast}$ \\
\midrule
 \multirow{20}{*}{structure-based} & \multirow{5}{*}{tFold-TCR} & \syb & \textbf{0.758}\,\sd{0.010} & \textbf{0.585}\,\sd{0.008} & \textbf{0.666}\,\sd{0.003} & \textbf{0.366}\,\sd{0.014} & \textbf{0.251}\,\sd{0.059} & \textbf{0.292}\,\sd{0.017} \\
& & STAG-LLM & 0.620\,\sd{0.018} & 0.473\,\sd{0.020} & \underline{0.582}\,\sd{0.008} & \underline{0.292}\,\sd{0.006} & \underline{0.166}\,\sd{0.024} & \underline{0.264}\,\sd{0.018} \\
& & SageTCR & 0.622\,\sd{0.009} & 0.443\,\sd{0.004} & 0.547\,\sd{0.011} & 0.240\,\sd{0.007} & 0.122\,\sd{0.027} & 0.182\,\sd{0.036} \\
& & Abeer & 0.591\,\sd{0.008} & 0.434\,\sd{0.006} & 0.550\,\sd{0.011} & 0.225\,\sd{0.014} & 0.071\,\sd{0.007} & 0.157\,\sd{0.024} \\
& & TRAP & \underline{0.651}\,\sd{0.015} & \underline{0.513}\,\sd{0.014} & 0.499\,\sd{0.010} & 0.182\,\sd{0.010} & 0.034\,\sd{0.000} & 0.095\,\sd{0.024} \\
\addlinespace[3pt]\cmidrule(lr){2-9}\addlinespace[3pt]
& \multirow{5}{*}{ESMFold2} & \syb & \textbf{0.759}\,\sd{0.004} & \textbf{0.584}\,\sd{0.003} & \textbf{0.670}\,\sd{0.005} & \textbf{0.372}\,\sd{0.013} & \textbf{0.273}\,\sd{0.029} & \textbf{0.304}\,\sd{0.011} \\
& & STAG-LLM & \underline{0.714}\,\sd{0.013} & 0.507\,\sd{0.016} & \underline{0.659}\,\sd{0.003} & \underline{0.312}\,\sd{0.007} & \underline{0.172}\,\sd{0.001} & \underline{0.245}\,\sd{0.004} \\
& & SageTCR & 0.618\,\sd{0.002} & 0.437\,\sd{0.014} & 0.545\,\sd{0.015} & 0.227\,\sd{0.003} & 0.115\,\sd{0.016} & 0.174\,\sd{0.030} \\
& & Abeer & 0.590\,\sd{0.003} & 0.431\,\sd{0.002} & 0.579\,\sd{0.008} & 0.257\,\sd{0.028} & 0.108\,\sd{0.049} & 0.188\,\sd{0.025} \\
& & TRAP & 0.649\,\sd{0.006} & \underline{0.514}\,\sd{0.012} & 0.497\,\sd{0.025} & 0.180\,\sd{0.003} & 0.041\,\sd{0.017} & 0.096\,\sd{0.008} \\
\addlinespace[3pt]\cmidrule(lr){2-9}\addlinespace[3pt]
& \multirow{5}{*}{TCRmodel2} & \syb & \textbf{0.769}\,\sd{0.007} & \textbf{0.593}\,\sd{0.004} & \textbf{0.668}\,\sd{0.003} & \textbf{0.367}\,\sd{0.002} & \textbf{0.267}\,\sd{0.009} & \textbf{0.301}\,\sd{0.003} \\
& & STAG-LLM & \underline{0.632}\,\sd{0.009} & 0.477\,\sd{0.005} & 0.586\,\sd{0.008} & \underline{0.295}\,\sd{0.018} & \underline{0.153}\,\sd{0.029} & \underline{0.264}\,\sd{0.019} \\
& & SageTCR & 0.596\,\sd{0.028} & 0.418\,\sd{0.011} & 0.537\,\sd{0.012} & 0.235\,\sd{0.004} & 0.112\,\sd{0.018} & 0.174\,\sd{0.011} \\
& & Abeer & 0.583\,\sd{0.023} & 0.434\,\sd{0.017} & \underline{0.597}\,\sd{0.006} & 0.291\,\sd{0.021} & 0.145\,\sd{0.024} & 0.236\,\sd{0.019} \\
& & TRAP & 0.630\,\sd{0.008} & \underline{0.486}\,\sd{0.011} & 0.487\,\sd{0.009} & 0.172\,\sd{0.004} & 0.019\,\sd{0.006} & 0.077\,\sd{0.006} \\
\addlinespace[3pt]\cmidrule(lr){2-9}\addlinespace[3pt]
& \multirow{5}{*}{Boltz-2} & \syb & \textbf{0.753}\,\sd{0.004} & \textbf{0.579}\,\sd{0.007} & \textbf{0.661}\,\sd{0.006} & \textbf{0.372}\,\sd{0.003} & \textbf{0.278}\,\sd{0.013} & \textbf{0.311}\,\sd{0.012} \\
& & STAG-LLM & \underline{0.710}\,\sd{0.002} & \underline{0.505}\,\sd{0.003} & \underline{0.655}\,\sd{0.008} & \underline{0.314}\,\sd{0.026} & \underline{0.169}\,\sd{0.012} & \underline{0.254}\,\sd{0.020} \\
& & SageTCR & 0.597\,\sd{0.001} & 0.422\,\sd{0.013} & 0.527\,\sd{0.012} & 0.224\,\sd{0.009} & 0.115\,\sd{0.003} & 0.146\,\sd{0.004} \\
& & Abeer & 0.600\,\sd{0.003} & 0.437\,\sd{0.009} & 0.580\,\sd{0.014} & 0.255\,\sd{0.032} & 0.054\,\sd{0.026} & 0.188\,\sd{0.023} \\
& & TRAP & 0.642\,\sd{0.015} & 0.502\,\sd{0.006} & 0.491\,\sd{0.011} & 0.175\,\sd{0.002} & 0.016\,\sd{0.013} & 0.082\,\sd{0.009} \\
\midrule
\multirow{2}{*}{sequence-based} & - & NetTCR-2.2 & 0.688\,\sd{0.007} & 0.536\,\sd{0.012} & 0.604\,\sd{0.010} & 0.319\,\sd{0.009} & 0.207\,\sd{0.026} & 0.257\,\sd{0.006} \\
& - & ERGO-II & 0.668\,\sd{0.016} & 0.502\,\sd{0.016} & 0.593\,\sd{0.009} & 0.284\,\sd{0.010} & 0.061\,\sd{0.026} & 0.224\,\sd{0.012} \\
\bottomrule
\end{tabular}
}
\end{table}

\subsection{Binding Prediction and Stability against Retrained Baselines}
\label{sec:exp:main}
\vspace{0.1in}
We conduct experiments on five epitope-disjoint folds, ensuring that all test peptides are unseen during training. We run the comparison on structures generated by four different tools, retraining every method on that tool's structures (\cref{tab:main,fig:results}). A structure-based predictor inherits its upstream tool, and the folding tools disagree, so we also measure how far that choice moves the answer. For each epitope and fold we rank that epitope's candidate TCRs with the same method trained on each tool's structures, then compare the rankings (\cref{tab:stability}).\\ 
\textbf{Results.} \sys is first on every metric under every folding tool, and the size of the lead depends on which part of the ranking a metric reads. AUROC weighs the whole candidate list, and there the strongest baseline comes close. STAG-LLM trails \sys by 0.011 in micro-AUROC under ESMFold2 and by 0.006 under Boltz-2, so the whole-list metric barely separates the two. Metrics that weigh the top of the list give a different reading. On micro-AUPRC the same baseline trails by 0.058 to 0.074, and on MCC at a fixed threshold of 0.5 it trails by 51 to 75\%. Calibration accounts for part of that last figure, since tuning the threshold on the test fold narrows the lead to 10 to 24\%. What remains after the tuning is discrimination, and a deployed screen meets both terms at once, because it reads the top of the list at a threshold fixed in advance. Read one fold at a time, the lead holds on all five folds in 16 of the 24 cells, and the eight that fall short are two micro-AUROC cells, two micro-AUPRC cells and all four $\mathrm{MCC}^{\ast}$ cells (\cref{tab:paired}).

\begin{figure}[htp]
\centering
\includegraphics[width=\linewidth]{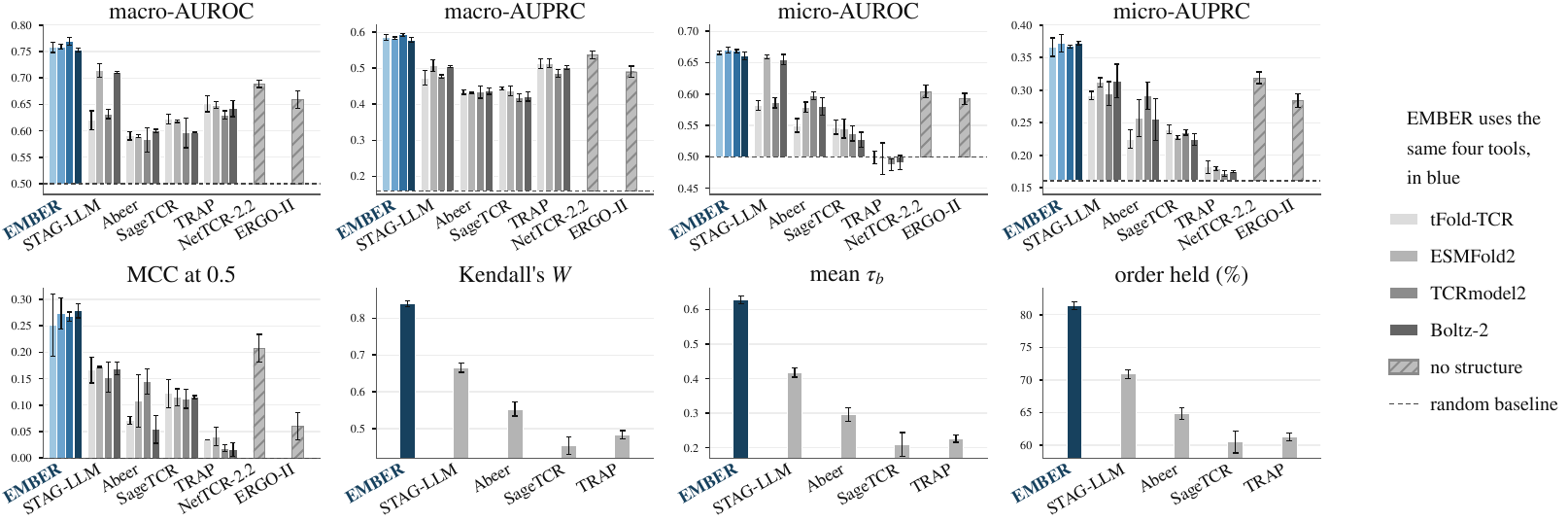}
\caption{\textbf{Means over five folds and three seeds, with every bar starting at chance}. Structural methods carry one bar per folding tool, light to dark in the order tFold-TCR, ESMFold2, TCRmodel2, Boltz-2. A method whose four bars are uneven depends on the upstream tool. The last three panels read the same rankings three ways, over the four tools of \cref{tab:stability}. Kendall's $W$ carries bootstrapped intervals.}
\label{fig:results}
\vspace{-0.1in}
\end{figure}

\begin{table}[htp]
\centering
\footnotesize
\setlength{\tabcolsep}{6pt}

\begin{minipage}[t]{0.28\textwidth}
    \vspace{0pt}
    \captionsetup{justification=raggedright,singlelinecheck=false}
    \caption{\textbf{Cross-tool agreement of methods trained on structures from four folding tools.}
    Agreement is evaluated over 142 epitopes with at least ten candidates each, comprising 44.6k candidates in total.}
    \label{tab:stability}
\end{minipage}
\hfill
\begin{minipage}[t]{0.71\textwidth}
    \vspace{0pt}
    \centering
    \resizebox{\linewidth}{!}{%
    \begin{tabular}{lcccc}
        \toprule
        method & $W$ & mean $\tau_b$ & order held & Overlap@10 \\
        \midrule
        \syb
        & \textbf{0.840}\,\sd{0.008}
        & \textbf{0.628}\,\sd{0.012}
        & \textbf{81.4\%}\,\sd{0.6}
        & \textbf{0.729}\,\sd{0.013} \\
        \midrule
        STAG-LLM
        & 0.666\,\sd{0.013}
        & 0.418\,\sd{0.014}
        & 70.9\%\,\sd{0.7}
        & 0.675\,\sd{0.008} \\
        Abeer
        & 0.553\,\sd{0.019}
        & 0.296\,\sd{0.019}
        & 64.8\%\,\sd{0.9}
        & 0.599\,\sd{0.001} \\
        TRAP
        & 0.484\,\sd{0.011}
        & 0.227\,\sd{0.011}
        & 61.3\%\,\sd{0.6}
        & 0.515\,\sd{0.010} \\
        SageTCR
        & 0.454\,\sd{0.023}
        & 0.210\,\sd{0.034}
        & 60.5\%\,\sd{1.7}
        & 0.573\,\sd{0.046} \\
        \bottomrule
    \end{tabular}%
    }
\end{minipage}
\end{table}

\vspace{-0.1in}
\sys is the most stable method by a wide margin. Its concordance is $W=0.840$, against 0.454 to 0.666 for the four structure-consuming baselines, and that is \textbf{26\% to 85\% higher agreement} than any of them. Read pairwise, 81.4\% of candidate pairs keep their relative order when the folding tool is swapped, against 60.5\% to 70.9\% for the baselines. The spread across seeds says what the comparison can resolve. \sys's $W$ varies by 0.008 across seeds and the closest baseline by 0.013, against a separation of 0.174. The least stable baseline varies most, up to 0.023 in $W$ and 0.034 in $\tau_b$, so an experiment expecting an effect that size needs more seeds rather than more folds.

\finding{\sys leads every metric under every folding tool, most clearly where every epitope counts equally and at the threshold a deployed system runs at. Its ranking is also the least disturbed by a change of folding tool, by a margin far outside the spread across training seeds.}
\vspace{0.1in}
\subsection{Influence of Structure}
\label{sec:exp:structure}
\vspace{0.1in}
We conduct three analyses to assess how structural information influences prediction. First, we directly use each folding tool's interface confidence as a training-free binding score. Second, we stratify co-folded pairs into confidence percentile bands and evaluate each method at its own operating threshold (\cref{fig:structure}); because confidence is available only for co-folded complexes, only the three complex-based methods see the structure a band scores. For each band, we report the normalized positive likelihood ratio, $\mathrm{LR}^{+}=\mathrm{TPR}/\mathrm{FPR}$, relative to the method's lowest-confidence band, thereby removing differences in operating point. Finally, we ablate \sys to quantify the contribution of structural information.

\textbf{Results.} The complex carries evidence that two monomers do not. Within one pMHC, the interface confidence separates a true binder from a decoy at Cohen's $d$ of 0.20 to 0.60. The independently folded receptor reaches only 0.05 to 0.12 on the same pairs. Read untrained as a score, the same confidence beats all six retrained baselines under TCRmodel2. No method that reads the complex turns this evidence into discrimination. Every method gains from the least to the most confident band. The two sequence baselines see no structure, so their gain measures what the pair alone is worth. The three complex consumers do not exceed that level, averaging 16\% against 17\%. The metric itself is not saturated, since \sys reaches 56\% without reading the complex. Nor are the flat curves an artefact of weak classifiers, since STAG-LLM has the second highest micro-AUROC of the seven methods and the second lowest gain.

Removing the structure entirely from \sys costs 11.1\% of that margin in micro-AUPRC. Replacing the two monomers with the predicted complex costs 12.3\%, and it costs more still at a fixed threshold. A docked assembly is therefore worth no more to \sys than no structure at all. Removing the canonical peptide frame causes MCC@0.5 to fall by 23.4\%, and the fall is 8.3\% once the threshold is tuned on the test fold. So, the canonical frame improves calibration and per-epitope precision.

\begin{table}[H]
\vspace{-4pt}
\centering
\footnotesize\setlength{\tabcolsep}{4pt}
\caption{\textbf{Main ablation, on the ESMFold2 structures.} \emph{remove
structure} keeps the residue identities and the original ESM-C output. \emph{docked complex} feeds \sys the co-folded complex. \emph{no peptide frame} removes the canonical peptide frame. The arms run on ESMFold2 because it gives the most favourable input, leading every interface and placement metric on the post-cutoff targets of \cref{tab:folding-clean}. \Cref{app:ablations} gives two further ablations.}
 \vspace{-0.05in}
\resizebox{\textwidth}{!}{%
\begin{tabular}{lcccccc}
\toprule
arm & macro-AUROC & macro-AUPRC & micro-AUROC & micro-AUPRC & MCC@0.5 & MCC$^{\ast}$ \\
\midrule
\sys (control) & 0.759\,\sd{0.004} & 0.584\,\sd{0.003} & 0.670\,\sd{0.005} & 0.372\,\sd{0.013} & 0.273\,\sd{0.029} & 0.304\,\sd{0.011} \\
\quad remove structure & 0.753\,\sd{0.008} & 0.570\,\sd{0.006} & 0.649\,\sd{0.003} & 0.349\,\sd{0.001} & 0.236\,\sd{0.018} & 0.283\,\sd{0.010} \\
\quad docked complex & 0.745\,\sd{0.009} & 0.578\,\sd{0.013} & 0.654\,\sd{0.009} & 0.346\,\sd{0.033} & 0.232\,\sd{0.063} & 0.274\,\sd{0.027} \\
\quad no peptide frame & 0.740\,\sd{0.008} & 0.558\,\sd{0.004} & 0.664\,\sd{0.007} & 0.360\,\sd{0.004} & 0.209\,\sd{0.017} & 0.279\,\sd{0.009} \\
\bottomrule
\end{tabular}%
}

\label{tab:ablation}
\vspace{-12pt}
\end{table}

\finding{Co-folding encodes binding evidence in its own confidence, yet no complex consumer converts it into discrimination, and supplying the complex to \sys costs accuracy. The structure contributes a geometric scaffold, so two unbound monomers give up nothing measurable.}

\begin{figure}[!ht]
\centering
\begin{minipage}[t]{0.452\textwidth}
\vspace{0pt}
\centering
\includegraphics[width=\linewidth]{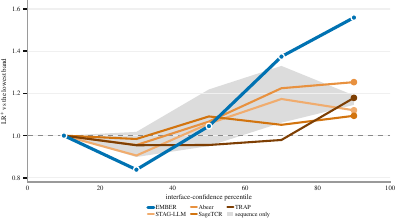}
\caption{\textbf{How $\mathrm{LR}^{+}$ varies with the structure confidence level.} The gray band spans the two sequence baselines. The first two bands hold the same share of binders, 0.112 each, so the step between them carries no signal.}
\label{fig:structure}
\end{minipage}\hfill
\begin{minipage}[t]{0.50\textwidth}
\vspace{0pt}
\small
\captionsetup{justification=raggedright,singlelinecheck=false}
\captionof{table}{
    \textbf{Preprocessing cost on a single A100-SXM4-80GB GPU.}
    All tools were benchmarked on a fixed pool of 50 TCR--pMHC pairs,
    50 TCRs and 20 pMHCs. TCRmodel2 uses 416-residue complexes rather
    than 610-residue inputs, making its reported savings not directly
    comparable. Times include inference only. \Cref{app:cost} gives the per-tool caveats.
}
\label{tab:cost}
\vspace{-0.02in}
\setlength{\tabcolsep}{4.5pt}
\renewcommand{\arraystretch}{2.0}
\renewcommand{\Cell}[1]{{\renewcommand{\arraystretch}{0.95}\begin{tabular}{@{}c@{}}#1\end{tabular}}}
\resizebox{\linewidth}{!}{%
\begin{tabular}{@{}l ccc cc c@{}}
        \toprule
        & \multicolumn{3}{c}{one structure (s)}
        & \multicolumn{2}{c}{this evaluation set (GPU-h)} & \\
        \cmidrule(lr){2-4}
        \cmidrule(lr){5-6}
        tool
        & \Cell{TCR\\228\,aa}
        & \Cell{pMHC\\382\,aa}
        & \Cell{complex\\610\,aa}
        & \Cell{\sys\\21,830 monomers}
        & \Cell{complex\\45,909 pairs}
        & saving \\
        \midrule
        tFold-TCR
        & \textbf{0.672}
        & \textbf{1.618}
        & \textbf{6.329}
        & \textbf{4.2}
        & 80.7
        & 19.4$\times$ \\
        ESMFold2
        & 5.190
        & 10.522
        & 25.629
        & 32.0
        & 326.8
        & 10.2$\times$ \\
        Boltz-2
        & 10.92
        & 17.50
        & 23.784
        & 66.9
        & 303.3
        & 4.5$\times$ \\
        TCRmodel2
        & 102.8
        & 111.5
        & 125.6
        & 624.1
        & 1{,}602.3
        & 2.6$\times$ \\
        \bottomrule
    \end{tabular}%
}

\end{minipage}
\end{figure}

\vspace{-0.15in}
\paragraph{Preprocessing Cost.}
\label{sec:exp:cost}

\sys has a preprocessing cost of $\mathcal{O}(N{+}M)$. All four tools mentioned ran on one A100-SXM4-80GB over one fixed set of samples, and \cref{tab:cost} gives the result. The evaluation set contains 21,477 TCR, 353 pMHC, and 45,909 candidate pairs. \sys reduces preprocessing cost by reusing each monomer structure across the pairs it appears in. Folding every monomer with tFold-TCR costs 4.2 GPU-h. The same set cost 80.7, 303.3, 326.8 and 1{,}602.3 GPU-h as complexes under tFold-TCR, Boltz-2, ESMFold2 and TCRmodel2, a saving of up to 19.4$\times$ against complexes.

\vspace{0.10in}
\section{Conclusion}
\vspace{0.1in}
\label{sec:conclusion}
We propose \syb, a pose-independent framework for TCR--pMHC binding prediction that avoids \textbf{assembly-induced interface corruption (AIC)}, a failure mode in which complex-placement errors alter the residues and contacts presented to downstream predictors. Testing across 49 crystal structures on six folding tools, we show that such placement errors are substantial. By encoding each molecule independently, \sys removes relative pose from the binding score and is first on every metric under all four folding tools, improving MCC at the fixed threshold of 0.5 by \textbf{51 to 75\%} over the strongest baseline and ranking stability across folding tools by \textbf{26 to 85\%} over structural baselines. It also cuts preprocessing by up to \textbf{19.4$\times$} on the same evaluation set. These results suggest a different way of using predicted structure for the TCR--pMHC prediction task. \sys uses only the trustworthy part of TCR and pMHC structures, and it is both more accurate and more stable for it.

\clearpage

\bibliographystyle{plainnat}
\bibliography{refs}

\clearpage

\beginappendix

\label{sec:append}
\part*{Supplementary Material}

{
\setlength{\parskip}{-0em}
\startcontents[sections]
\printcontents[sections]{ }{1}{}
}
\vspace{0.1in}
\section{Notation}
\label{app:notation}
\vspace{0.1in}
Every symbol used by \sys, with the value it takes in the reported configuration.
Indices $i,j$ always run over residues of a single monomer. No symbol in the table ranges over
both molecules at once, which is the notational form of \cref{prop:pose}.
Calligraphic letters denote sets and graphs, uppercase Latin letters denote matrices and
whole-molecule quantities, lowercase Latin letters denote per-residue vectors and scalars, and
Greek letters denote learned networks and fixed nonlinearities. Symbols follow that order within
each block.

\begin{longtable}{@{}l p{0.575\linewidth} l@{}}
\caption{Notation for \sys. Shapes are per residue unless stated otherwise.}
\label{tab:notation}\\
\toprule
symbol & meaning & shape or value \\
\midrule
\endfirsthead
\multicolumn{3}{@{}l}{\textit{\cref{tab:notation} continued}} \\
\toprule
symbol & meaning & shape or value \\
\midrule
\endhead
\bottomrule
\endlastfoot
\multicolumn{3}{@{}l}{\textit{graphs and inputs}} \\
$\mathcal{G}^{T},\mathcal{G}^{P}$ & residue graphs of the receptor and of the antigen, built from separately folded monomers and never merged & --- \\

$\mathcal{N}(i)$ & the $k$ nearest neighbors of $i$ at layer $\ell$ & $|\mathcal{N}(i)|=k$ \\
$N_T,N_P$ & residues in $\mathcal{G}^{T}$ ($\beta$ chain) and in $\mathcal{G}^{P}$ & $\approx114$, $\approx382$ \\
$W_e$ & learned bottleneck that compresses $e_i$ & $\mathbb{R}^{128\times1152}$ \\
$i,j$ & residue indices within one graph & --- \\
$p_i$ & C$\alpha$ coordinate of residue $i$, in that monomer's own frame & $\mathbb{R}^{3}$ \\
$x_i$ & invariant node descriptor: curvature ratios, dihedrals, solvent-accessible area, neighbor count, pLDDT, residue one-hot, secondary-structure one-hot & $\mathbb{R}^{37}$ \\
$e_i$ & frozen per-residue ESM-C embedding, cached once & $\mathbb{R}^{1152}$ \\
$h^{(0)}_i$ & encoder input, $[\,x_i\,\|\,W_e e_i\,]$ & $\mathbb{R}^{165}$ \\
$k$ & neighbors per node & $30$ \\
$d_{ij}$ & C$\alpha$ distance $\lVert p_i-p_j\rVert$ & \AA{} \\
$b_{ij}$ & biochemical compatibility from residue identity alone: hydrogen bonding, hydrophobic product, van der Waals ratio, charge product, aromatic flag & $\mathbb{R}^{5}$ \\
$a_{ij}$ & edge feature, $[\,\rho(d_{ij})\,\|\,d_0/d_{ij}\,\|\,b_{ij}\,]$, $d_0=4$~\AA{} & $\mathbb{R}^{22}$ \\
$\rho$ & radial basis expansion of $d_{ij}$ over $[0,20]$~\AA{} & $\mathbb{R}^{16}$ \\
\multicolumn{3}{@{}l}{\textit{monomer encoder}} \\
$\mathcal{K}^{(\ell)}$ & edge set at layer $\ell$, $\mathrm{kNN}(p^{(\ell)},k)$, rebuilt from the updated coordinates & $kN$ edges \\
$L$ & encoder layers & $2$ \\
$D$ & encoder width & $256$ \\
$H^{T},H^{P}$ & final states $h^{(L)}$ stacked by residue & $\mathbb{R}^{N_T\times D},\mathbb{R}^{N_P\times D}$ \\
$h^{(\ell)}_i$ & invariant node state after layer $\ell$; $h^{(0)}_i$ is projected to width $D$ before layer~1 & $\mathbb{R}^{D}$ \\
$p^{(\ell)}_i$ & coordinate after layer $\ell$; $p^{(0)}_i=p_i$ & $\mathbb{R}^{3}$ \\
$c_{ij}$ & edge context at layer $\ell$, $[\,h^{(\ell-1)}_i\,\|\,h^{(\ell-1)}_j\,\|\,a_{ij}\,]$ & $\mathbb{R}^{2D+22}$ \\
$m_{ij}$ & message carried on edge $(i,j)$ & $\mathbb{R}^{D}$ \\
$g_{ij}$ & chemistry gate $\sigma(\phi_g(b_{ij}))$ scaling that message & $(0,1)$ \\
$\theta$ & encoder parameters, shared by both graphs & --- \\
$\ell$ & layer index, $\ell=1,\dots,L$ & --- \\
$\phi_m$ & message network, two hidden layers with SiLU & $\mathbb{R}^{2D+22}\!\to\!\mathbb{R}^{D}$ \\
$\phi_g$ & gate network on $b_{ij}$, two hidden layers with SiLU & $\mathbb{R}^{5}\!\to\!\mathbb{R}$ \\
$\phi_x$ & coordinate-weight network, $\tanh$-bounded output & $\mathbb{R}^{2D+22}\!\to\!(-1,1)$ \\
$\sigma$ & logistic sigmoid & --- \\
$\mathrm{LN}$ & layer normalization over the feature axis & --- \\
\multicolumn{3}{@{}l}{\textit{contact head and objective}} \\
$\mathcal{B}$ & a training batch of pairs, $\mathcal{B}=\mathcal{B}^{+}\cup\mathcal{B}^{-}$ & --- \\
$\mathcal{B}^{+},\mathcal{B}^{-}$ & the binders and the non-binders of $\mathcal{B}$ & $|\mathcal{B}^{-}|=3|\mathcal{B}^{+}|$ \\
$\mathcal{L}$ & objective, $\mathcal{L}_{\mathrm{BCE}}+\mathcal{L}_{\mathrm{BT}}$ & --- \\
$C$ & rows of $H^{T}$ holding CDR3$\beta$ residues, fixed by sequence annotation & --- \\
$E$ & rows of $H^{P}$ holding peptide residues, fixed by sequence annotation & --- \\
$W_z$ & projection of the concatenated pair & $\mathbb{R}^{128\times256}$ \\
$S$ & slots of the canonical peptide frame & $9$ \\
$d$ & width of one attention head, four heads in all & $32$ \\
$z_t$ & direction~1 output: CDR3 rows query the peptide & $\mathbb{R}^{128}$ \\
$z_p$ & direction~2 output: peptide rows query the CDR3 loops & $\mathbb{R}^{128}$ \\
$z$ & pair embedding, $\mathrm{LN}(W_z[\,z_t\|z_p\,])$ & $\mathbb{R}^{128}$ \\
$w$ & classifier weight & $\mathbb{R}^{128}$ \\
$u$ & pair logit, $w^{\top}z$ & $\mathbb{R}$ \\
$\hat{y}$ & predicted binding probability, $\sigma(u)$ & $(0,1)$ \\
$y$ & binary label, $1$ for a binder and $0$ otherwise & $\{0,1\}$ \\
$\pi$ & monotone map from peptide residues onto the canonical slots & $\{1..|E|\}\!\to\!\{1..S\}$ \\
$\mathrm{MHA}(X,Y)$ & multi-head attention, queries from $X$, keys and values from $Y$ & --- \\
$\mathrm{pool}_{C}$ & mask-aware mean and max over the rows of $C$, concatenated & $\mathbb{R}^{256}$ \\
$\mathrm{pool}_{E}$ & per-slot mean and max under $\pi$, concatenated over the $S$ slots & $\mathbb{R}^{2S\cdot128}$ \\
$\mathrm{proj}_1$ & linear map onto the contact dimension & $\mathbb{R}^{256}\!\to\!\mathbb{R}^{128}$ \\
$\mathrm{proj}_2$ & linear map onto the contact dimension & $\mathbb{R}^{2304}\!\to\!\mathbb{R}^{128}$ \\
\multicolumn{3}{@{}l}{\textit{pose independence}} \\
$R_1,R_2$ & rotations applied independently to the two molecules & $SO(3)$ \\
$t_1,t_2$ & translations applied independently to the two molecules & $\mathbb{R}^{3}$ \\
$g=(R,t)$ & one rigid motion; $g_1$ acts on the receptor and $g_2$ on the antigen & --- \\
$s$ & the score as a function of the two graphs, $s(\mathcal{G}^{T},\mathcal{G}^{P})=\hat{y}$ & $(0,1)$ \\
$N,M$ & distinct receptors and targets in a screen, not residue counts & --- \\
\end{longtable}

\vspace{0.1in}
\section{Proof of Main Text}
\label{app:proof}
\vspace{0.1in}

Write $(R_1,t_1)$ for the motion applied to the receptor and $(R_2,t_2)$ for the one applied to the
antigen. The two are unrelated, which is what separates \cref{prop:pose} from invariance to a single
motion of a joint structure.

\vspace{0.1in}
\subsection{\cref{lem:layer}}
\vspace{0.1in}

\textit{Proof of \cref{lem:layer}.}
\label{app:lemma2}

In \cref{eq:msg} $\phi_m$ receives only the edge context $c_{ij}$ and $\phi_g$ only $b_{ij}$. The
hypothesis leaves $h^{(\ell-1)}$ unchanged, and $a_{ij}$, rebuilt from $p^{(\ell-1)}$, is unchanged by
the argument of \cref{def:inputs} applied to $p^{(\ell-1)}$, so $m_{ij}$ and the gate $g_{ij}$ are unchanged.
Their gated sum followed by $\mathrm{LN}$ gives $h^{(\ell)}$ unchanged. Writing $q_i=p^{(\ell-1)}_i$
and substituting $q\mapsto Rq+t$ into the coordinate update of \cref{eq:upd},
\[
(Rq_i+t)+\frac{1}{|\mathcal{N}(i)|}\sum_{j\in\mathcal{N}(i)}\big(Rq_i-Rq_j\big)\,\phi_x(\cdot)
= R\Big(q_i+\frac{1}{|\mathcal{N}(i)|}\sum_{j\in\mathcal{N}(i)}(q_i-q_j)\,\phi_x(\cdot)\Big)+t
= Rp^{(\ell)}_i+t,
\]
where $\phi_x$ is unchanged because its arguments are, and the translation cancels inside each
difference. The edge set rebuilt from $p^{(\ell)}$ is the same by the argument
of \cref{def:inputs}.

\vspace{0.1in}

\subsection{\cref{prop:pose}}
\vspace{0.1in}

\textit{Proof of \cref{prop:pose}.}
\label{app:prop}

Apply \cref{def:inputs} to each graph, then \cref{lem:layer} by induction over the $L$ layers. So
$H^{T}$ is unchanged by $(R_1,t_1)$ and $H^{P}$ by $(R_2,t_2)$. The two towers share weights but
exchange no message, so neither output depends on the other graph's frame and the two statements
hold simultaneously. The contact head is a function of $H^{T}$, $H^{P}$ and the row masks selecting
$C$ and $E$. It admits no position argument, and the masks are index sets fixed by sequence, so the
score depends on $p$ only through $H$. Hence
$s\big(R_1\mathcal{G}^{T}\!+t_1,\,R_2\mathcal{G}^{P}\!+t_2\big)=s\big(\mathcal{G}^{T},\mathcal{G}^{P}\big)$.

\paragraph{Why the independence matters.}
The guarantee comes from the head's signature rather than from its weights, so it holds at
initialization and cannot be traded away by training. The model has no relative-pose input, so there is nothing for training to ignore. An equivariant network applied to a docked complex, as in
\citet{siriarchawatana2026tcrlens}, gives only the weaker statement that the score is unchanged under
a single motion of the whole assembly. That quotients out the choice of laboratory frame but keeps
the relative pose, which is exactly the quantity we measure to be unreliable. \Cref{prop:pose}
quotients out the relative pose itself, because no operation in the model could read it.

\vspace{0.1in}

\section{Graph Construction and Features}
\label{app:features}
\vspace{0.1in}

\textbf{What is cached.} One monomer is one graph, and a node is a residue represented by its
C$\alpha$. The cache holds the node scalars $x\in\mathbb{R}^{N\times37}$ of \cref{tab:nodefeat} and
the coordinates $p\in\mathbb{R}^{N\times3}$, and nothing else. Edges are not cached, because the
encoder rebuilds the $k$-NN graph and every edge feature from the updated coordinates at each layer.
The per-residue language-model embedding lives in a separate store and its bottleneck is a learned
module rather than a preprocessing step, which is why the cached node vector is $37$ wide rather
than the $165$ the encoder sees.

\textbf{What makes the representation invariant.} Every entry of \cref{tab:nodefeat} and every
biochemical term of \cref{tab:edgefeat} is a function of residue identity or of relative geometry,
so a rigid motion leaves all of them fixed by construction. \Cref{app:proof} turns this into the statement proved for the binding score.

\begin{table}[H]
\centering
\footnotesize\setlength{\tabcolsep}{5pt}
\caption{\textbf{The $37$ node scalars.} Every entry reads residue identity or relative geometry
alone, so none of them introduces an absolute coordinate or a preferred axis. Dimensions $0$--$13$
and $34$--$36$ are the $17$ structure-derived ones that the no-structure ablation zeroes, which
leaves the residue one-hot standing.}
\label{tab:nodefeat}
\begin{tabular}{@{}llp{3.25in}@{}}
\toprule
dims & name & definition \\
\midrule
$0$--$1$   & curvature            & two ratios of the eigenvalues $\lambda_0\!\ge\!\lambda_1\!\ge\!\lambda_2$ of the centered covariance of the $10$ nearest C$\alpha$, namely $\lambda_2/(\lambda_0{+}\lambda_1{+}\lambda_2)$ and $\lambda_1\lambda_2/\lambda_0^{2}$ \\
$2$        & electrostatic        & $\sum_j q_j/(1+d_j)$ over heavy atoms of other residues within $10$~\AA{} of the C$\alpha$, on a per-atom partial-charge table \\
$3$        & SASA                 & Shrake--Rupley, summed over the atoms of the residue \\
$4$        & coordination         & heavy atoms of other residues within $8$~\AA{} \\
$5$--$7$   & backbone dihedral    & $\phi,\psi,\omega$ in radians, undefined values set to $0$ \\
$8$--$11$  & side-chain dihedral  & $\chi_1$ to $\chi_4$ in radians, undefined values set to $0$ \\
$12$       & density              & heavy atoms of other residues within $5$~\AA{} \\
$13$       & pLDDT                & B-factor rescaled to $[0,1]$, with the scale read from the file rather than assumed per tool \\
$14$--$33$ & residue identity     & $20$-way one-hot \\
$34$--$36$ & secondary structure  & $3$-way one-hot from P-SEA, helix, strand and coil \\
\bottomrule
\end{tabular}
\end{table}

\begin{table}[H]
\centering
\footnotesize\setlength{\tabcolsep}{5pt}
\caption{\textbf{The $22$ invariant edge scalars and the five biochemical terms.} Core sits before
secondary on purpose, so the ablation that drops the secondary chemistry is a zeroing of the last
two columns. The order is fixed by the cached graphs and cannot be permuted.}
\label{tab:edgefeat}
\begin{tabular}{@{}llp{3.15in}@{}}
\toprule
cols & name & definition \\
\midrule
$0$--$15$  & radial basis        & $16$ Gaussians with centers spaced evenly over $[0,20]$~\AA{} and $\gamma\approx5.63$ \\
$16$       & radial scalar       & inverse distance $d_0/d_{ij}$ with $d_0=4$~\AA{} \\
$17$--$19$ & core chemistry      & hydrogen bonding, hydrophobic compatibility, van der Waals ratio \\
$20$--$21$ & secondary chemistry & charge product, aromatic pairing \\
\midrule
\multicolumn{3}{@{}l}{\emph{the five biochemical terms, each a function of the two residue identities alone}} \\
\midrule
\multicolumn{2}{@{}l}{hydrogen bonding}        & $(\mathrm{don}_i\wedge\mathrm{acc}_j)\vee(\mathrm{acc}_i\wedge\mathrm{don}_j)$ \\
\multicolumn{2}{@{}l}{hydrophobic compatibility} & $(\mathrm{kd}_i/4.5)\,(\mathrm{kd}_j/4.5)$, a product rather than a difference, so two hydrophobic ends give a positive value and a mismatch a negative one \\
\multicolumn{2}{@{}l}{van der Waals ratio}     & $\min(r_i,r_j)/\max(r_i,r_j)$ with $r=(3V/4\pi)^{1/3}$ on residue volumes \\
\multicolumn{2}{@{}l}{charge product}          & $q_i q_j$ on net residue charge at pH $7.4$ \\
\multicolumn{2}{@{}l}{aromatic pairing}        & $\mathrm{arom}_i\wedge\mathrm{arom}_j$ over PHE, TYR, TRP and HIS \\
\bottomrule
\end{tabular}
\end{table}

\textbf{Why the edge carries an inverse distance.} The edge already holds a charge product
$q_i q_j$ but no radial decay, and the Coulomb kernel $q_i q_j / d$ cannot be written as any one of
the other terms. Left to itself the network has to synthesize $1/d$ out of $16$ local Gaussian bumps,
which are not monotone in $d$ and represent it poorly. A single monotone radial term multiplied by the
charge column is the Coulomb form exactly. Node dimension $2$ of \cref{tab:nodefeat} already uses the
same inverse-distance proxy, and this term is its edge-level counterpart. It is a function of
$\lVert p_i-p_j\rVert$ alone, so it leaves the invariance argument untouched.

\textbf{The hydrogen-bonding sets.} Donors are ARG, ASN, GLN, HIS, LYS, SER, THR, TRP and TYR.
Acceptors are ASN, ASP, GLN, GLU, HIS, SER, THR and TYR. 

\textbf{The invariance gate.} A unit test applies a random rotation and translation to a structure
and requires that the node scalars and the edge scalars are unchanged elementwise. It runs before any bulk build and before any
training run. One source of nondeterminism is known and tolerated. When the $30$th and the $31$st
neighbor of a residue sit within $10^{-3}$~\AA{} of each other, a rotation can swap them and change
the edge set, which we measure on $0.58\%$ of nodes for Boltz-2 structures and $0.86\%$ for tFold-TCR
and ESMFold2, so the gate is allowed to retry.
\vspace{0.1in}

\section{Scope of the Released Model}
\label{app:scope}
\vspace{0.1in}

\textbf{Receptor chains.} The cached receptor graph carries both TCR chains. The encoder is given
the $\beta$ chain alone. Dropping the $\alpha$ chain removes its node block and rebuilds the $k$-NN
edges, so nothing of that chain reaches the encoder. \Cref{app:ablations} reports the arm that
keeps it.

\textbf{Paired $\alpha$ chains.} A source database records an $\alpha$ chain together with its
$\beta$ chain for part of the set only. Matching each $(\mathrm{CDR3}\alpha,\mathrm{CDR3}\beta)$
pair against the paired records of VDJdb, McPAS-TCR and IEDB confirms $6{,}759$ of the $21{,}477$
receptors and $19{,}462$ of the $45{,}909$ rows. The rate is $70.7\%$ on binders and $37.0\%$ on
non-binders. It is a lower bound, because our copy of the 10x Genomics block carries no per-receptor
provenance.

\vspace{0.1in}

\section{AIC Observation Details}
\label{app:folding}
\vspace{0.1in}

\textbf{Targets.} The 49 class-I complexes carry 35 distinct peptides and 31 distinct MHC heavy
chains, so no single system dominates the set. Every tool folds the same five chains taken from
the crystal entry, which makes residue correspondence with the native structure hold by
construction and lets contacts be compared without any alignment step.

\textbf{Leakage control.} Templates are capped per target at the day before that target's
deposition, so no tool can retrieve the answer it is being scored against. As a check that the
cap is sufficient, splitting the 49 by deposition date leaves the possibly-seen half the
\emph{worse} half on docking ($p = 0.010$), the opposite of what memorization would produce.

\begin{table}[H]
\centering
\footnotesize\setlength{\tabcolsep}{3pt}
\caption{\textbf{Interface accuracy of six co-folding tools on the 49 crystal structures}, best per row in bold. Contacts are computed on raw coordinates without superposition, so they are what a downstream graph reads. Every RMSD is measured after superposing the prediction on the MHC heavy chain, except interface RMSD, which follows the CAPRI definition \citep{mendez2005capri} and re-superposes the interface locally. A target counts as well-docked when the receptor lands within 8~\AA{} of its crystal position. Boltz-2 leads almost every row, and \cref{tab:folding-clean} shows why that is not a capability result.}
\label{tab:folding}
\resizebox{\textwidth}{!}{
\begin{tabular}{lcccccc}
\toprule
 & Boltz-2 & ESMFold2 & TCRmodel2 & AF-Multimer & OpenDDE & tFold-TCR \\
\midrule
\multicolumn{7}{l}{\textit{interface contacts, median [mean]}} \\
contact recall @4~\AA{} $\uparrow$ & \textbf{0.87 [0.73]} & 0.52 [0.49] & 0.41 [0.40] & 0.38 [0.40] & 0.39 [0.38] & 0.26 [0.27] \\
contact recall @5~\AA{} $\uparrow$ & \textbf{0.92 [0.78]} & 0.62 [0.58] & 0.52 [0.51] & 0.53 [0.48] & 0.49 [0.44] & 0.30 [0.33] \\
contact recall @8~\AA{} $\uparrow$ & \textbf{0.93 [0.84]} & 0.71 [0.68] & 0.68 [0.62] & 0.64 [0.60] & 0.65 [0.51] & 0.50 [0.49] \\
contact precision @5~\AA{} $\uparrow$ & \textbf{0.89 [0.77]} & 0.62 [0.59] & 0.55 [0.49] & 0.44 [0.45] & 0.40 [0.43] & 0.34 [0.36] \\
contact Jaccard @5~\AA{} $\uparrow$ & \textbf{0.83 [0.69]} & 0.46 [0.44] & 0.34 [0.36] & 0.31 [0.33] & 0.26 [0.33] & 0.19 [0.22] \\
\midrule
\multicolumn{7}{l}{\textit{geometry (\AA), median [mean]}} \\
interface RMSD $\downarrow$ & \textbf{0.43 [1.07]} & 1.80 [2.31] & 2.31 [3.68] & 2.78 [4.02] & 2.65 [5.89] & 3.28 [4.21] \\
TCR docking RMSD $\downarrow$ & \textbf{1.50 [2.80]} & 6.96 [8.12] & 5.87 [8.70] & 8.09 [12.89] & 6.93 [16.21] & 7.14 [9.80] \\
peptide RMSD $\downarrow$ & \textbf{0.43 [0.69]} & 0.72 [1.01] & 0.66 [0.99] & 1.13 [1.70] & 0.88 [1.11] & 1.26 [1.84] \\
CDR3$\beta$ loop RMSD $\downarrow$ & \textbf{0.80 [1.68]} & 2.56 [3.50] & 3.95 [6.17] & 4.13 [5.58] & 4.24 [11.27] & 4.27 [5.86] \\
CDR3$\alpha$ loop RMSD $\downarrow$ & \textbf{0.87 [1.92]} & 3.96 [4.45] & 4.08 [5.60] & 4.07 [5.71] & 4.09 [11.55] & 5.44 [6.29] \\
\midrule
\multicolumn{7}{l}{\textit{receptor placement, fraction of the 49 targets}} \\
docking $<2$~\AA{} (near-native) $\uparrow$ & \textbf{59\%} & 6\% & 8\% & 0\% & 10\% & 4\% \\
docking $<4$~\AA{} $\uparrow$ & \textbf{78\%} & 18\% & 37\% & 14\% & 37\% & 12\% \\
docking $<5$~\AA{} $\uparrow$ & \textbf{80\%} & 29\% & 41\% & 16\% & 39\% & 18\% \\
docking $<8$~\AA{} $\uparrow$ & \textbf{94\%} & 59\% & 78\% & 49\% & 55\% & 57\% \\
docking $>8$~\AA{} (mis-docked) $\downarrow$ & \textbf{6\%} & 41\% & 22\% & 51\% & 45\% & 43\% \\
interface RMSD $<2$~\AA{} (CAPRI medium) $\uparrow$ & \textbf{80\%} & 59\% & 43\% & 33\% & 43\% & 20\% \\
interface RMSD $<4$~\AA{} (CAPRI acceptable) $\uparrow$ & \textbf{96\%} & 90\% & 84\% & 76\% & 67\% & 65\% \\
\midrule
\multicolumn{7}{l}{\textit{contact recall @5~\AA{} split by placement, median}} \\
well-docked targets $\uparrow$ & \textbf{0.92} & 0.72 & 0.63 & 0.64 & 0.69 & 0.42 \\
mis-docked targets $\uparrow$ & 0.20 & \textbf{0.44} & 0.14 & 0.36 & 0.05 & 0.18 \\
\bottomrule
\end{tabular}
}
\end{table}

\begin{table}[H]
\centering
\footnotesize\setlength{\tabcolsep}{3pt}
\caption{\textbf{The same table on the 10 targets released after Boltz-2 training cutoff of 2023-06-01}, the only ones it cannot have memorized. Every tool is scored on the same 10, so the comparison is matched. Boltz-2 goes from first on every contact metric to last on Jaccard and precision. The other five move far less, which is what separates memorization from a harder batch of targets.}
\label{tab:folding-clean}
\resizebox{\textwidth}{!}{
\begin{tabular}{lcccccc}
\toprule
 & Boltz-2 & ESMFold2 & TCRmodel2 & AF-Multimer & OpenDDE & tFold-TCR \\
\midrule
\multicolumn{7}{l}{\textit{interface contacts, median [mean]}} \\
contact recall @4~\AA{} $\uparrow$ & 0.19 [0.25] & \textbf{0.47 [0.49]} & 0.27 [0.31] & 0.39 [0.42] & 0.23 [0.21] & 0.17 [0.20] \\
contact recall @5~\AA{} $\uparrow$ & 0.28 [0.31] & \textbf{0.57 [0.56]} & 0.43 [0.43] & 0.49 [0.50] & 0.30 [0.29] & 0.26 [0.26] \\
contact recall @8~\AA{} $\uparrow$ & 0.45 [0.51] & \textbf{0.70 [0.69]} & 0.64 [0.60] & 0.66 [0.65] & 0.47 [0.41] & 0.41 [0.41] \\
contact precision @5~\AA{} $\uparrow$ & 0.26 [0.30] & \textbf{0.56 [0.61]} & 0.40 [0.40] & 0.45 [0.46] & 0.36 [0.28] & 0.33 [0.33] \\
contact Jaccard @5~\AA{} $\uparrow$ & 0.15 [0.19] & \textbf{0.39 [0.43]} & 0.25 [0.27] & 0.30 [0.34] & 0.20 [0.17] & 0.17 [0.17] \\
\midrule
\multicolumn{7}{l}{\textit{geometry (\AA), median [mean]}} \\
interface RMSD $\downarrow$ & 3.41 [3.24] & \textbf{1.62 [1.89]} & 2.97 [3.41] & 2.29 [3.52] & 3.57 [5.97] & 3.89 [4.28] \\
TCR docking RMSD $\downarrow$ & 6.55 [7.08] & \textbf{6.02 [5.74]} & 6.78 [7.99] & 7.34 [9.83] & 8.35 [16.32] & 9.64 [9.92] \\
peptide RMSD $\downarrow$ & 1.79 [1.56] & \textbf{0.89 [1.07]} & 1.75 [1.54] & 1.98 [1.69] & 1.97 [1.69] & 2.13 [2.07] \\
CDR3$\beta$ loop RMSD $\downarrow$ & 4.91 [4.41] & \textbf{2.44 [2.32]} & 3.96 [5.04] & 3.15 [4.15] & 4.78 [11.72] & 4.23 [4.92] \\
CDR3$\alpha$ loop RMSD $\downarrow$ & 4.55 [5.18] & \textbf{3.10 [3.92]} & 4.99 [5.74] & 3.34 [4.81] & 6.37 [12.99] & 6.87 [6.84] \\
\midrule
\multicolumn{7}{l}{\textit{receptor placement, fraction of the 10 targets}} \\
docking $<2$~\AA{} (near-native) $\uparrow$ & 0\% & \textbf{20\%} & 0\% & 0\% & 0\% & 0\% \\
docking $<4$~\AA{} $\uparrow$ & 10\% & \textbf{40\%} & 20\% & 20\% & 10\% & 20\% \\
docking $<5$~\AA{} $\uparrow$ & 10\% & \textbf{40\%} & 20\% & 20\% & 20\% & 20\% \\
docking $<8$~\AA{} $\uparrow$ & \textbf{70\%} & \textbf{70\%} & \textbf{70\%} & \textbf{70\%} & 40\% & 40\% \\
docking $>8$~\AA{} (mis-docked) $\downarrow$ & \textbf{30\%} & \textbf{30\%} & \textbf{30\%} & \textbf{30\%} & 60\% & 60\% \\
interface RMSD $<2$~\AA{} (CAPRI medium) $\uparrow$ & 10\% & \textbf{60\%} & 20\% & 50\% & 0\% & 0\% \\
interface RMSD $<4$~\AA{} (CAPRI acceptable) $\uparrow$ & 80\% & \textbf{100\%} & 90\% & 90\% & 70\% & 50\% \\
\midrule
\multicolumn{7}{l}{\textit{contact recall @5~\AA{} split by placement, median}} \\
well-docked targets $\uparrow$ & 0.40 & \textbf{0.72} & 0.47 & 0.63 & 0.38 & 0.30 \\
mis-docked targets $\uparrow$ & 0.20 & 0.34 & \textbf{0.42} & 0.40 & 0.27 & 0.21 \\
\bottomrule
\end{tabular}
}
\end{table}

\textbf{Sampling.} \Cref{tab:sampling} gives the per-tool figures behind the coherence argument.
Two of its rows need a caveat. Boltz-2's $0.754$~\AA{} cannot be read as accuracy, because $39$ of
the $49$ targets were released before its training cutoff. Split on that cutoff, the median is
$0.68$~\AA{} inside the training window against $1.72$~\AA{} outside it, Mann--Whitney
$p=7.6\times10^{-5}$, and no other tool shows the split. AF-Multimer needs no such split, since its
training cutoff precedes every target in the set, and restricting it to the ten clean targets leaves
$1.589$~\AA{} and $\gamma=0.969$. Its confidence column is also not on the same footing as the
others, because AlphaFold publishes no per-chain-pair value and the global score is used in its
place, so that one row cannot be compared against the others on calibration.

\begin{table}[H]
\centering
\footnotesize\setlength{\tabcolsep}{5pt}
\caption{\textbf{Twenty co-folds per target do not average AIC away.} Each stochastic tool folds
all $49$ targets $20$ times, which is $980$ structures per tool. AIC is the median over targets of
the per-target median. $\gamma$ is the directional coherence of the $20$ samples, against $0.21$ for uniformly
random directions. The change column is the median over targets of $1-\text{equal}/\text{single}$,
positive meaning better, with a two-sided Wilcoxon signed-rank $p$. The oracle column picks the best
of the $20$ per target and is not available to any deployed system.}
\label{tab:sampling}
\begin{tabular}{@{}lccccc@{}}
\toprule
tool & AIC (\AA{}) & $\gamma$ & equal-weight change & Wilcoxon $p$ & oracle \\
\midrule
Boltz-2      & \textbf{0.754} & 0.949          & $+3.6\%$ & 0.194 & $+44.4\%$ \\
ESMFold2     & 2.227          & 0.839          & $+6.7\%$ & 0.328 & $+52.0\%$ \\
AF-Multimer  & 2.965          & \textbf{0.992} & $+1.3\%$ & 0.851 & $+18.6\%$ \\
TCRmodel2    & 3.062          & 0.985          & $-2.5\%$ & 0.078 & $+27.3\%$ \\
OpenDDE      & 3.961          & 0.946          & $+0.6\%$ & 0.465 & $+34.6\%$ \\
tFold-TCR    & 3.52           & \multicolumn{4}{c}{deterministic, no distribution to sample} \\
\bottomrule
\end{tabular}
\end{table}

\vspace{0.1in}

\section{Experiment Results}
\label{app:results}
\vspace{0.1in}

\subsection{Evaluation Protocol}
\label{app:protocol}
\vspace{0.1in}

\textbf{Composition.} The curated evaluation set contains 7,371 measured binding pairs over 343 epitopes, 7,142 distinct TCRs and 353 class-I pMHCs. Sampled non-binders extend it to 45,909 rows at a prevalence of $0.161$, which is the chance level for every precision metric in the paper. The 343 epitopes are presented on 353 pMHC, because ten of them appear with two MHC alleles. The distribution is long tailed, and the ten most frequent epitopes account for most of the entries. That is a property of the field's data, not of our protocol, and it is the reason macro and micro aggregations are reported side by side.

\textbf{Negative samples.} The non-binders come with the STAG release and are used without
resampling \citep{slone2025stag}. STAG builds them two ways. It takes non-binding pairs from the
10x Genomics dextramer panel \citep{tenx2019immunity}, and it swaps TCRs, holding the pMHC of a
recorded binder fixed and drawing a TCR from the repertoire to pair with it. The released file
carries no provenance column, so the two blocks cannot be separated. \Cref{tab:negatives} gives
the composition. Evaluation scores every row. Training draws three non-binders per binder from the
same pool and redraws them each epoch (\cref{tab:hparams}).

\begin{table}[H]
\centering
\small\setlength{\tabcolsep}{8pt}
\caption{Composition of the $38{,}538$ non-binders. Quartiles run over the $353$ pMHC of the set.}
\label{tab:negatives}
\begin{tabular}{lc}
\toprule
quantity & value \\
\midrule
non-binders per binder within one pMHC, median & 5.5 \\
\quad quartiles                                & 4.9, 6.0 \\
\quad maximum                                  & 32 \\
pMHC carrying at least one binder              & 353 \\
pMHC carrying at least one non-binder          & 348 \\
\addlinespace
distinct TCRs                                  & 21,477 \\
\quad in a binding pair                        & 7,142 \\
\quad in a non-binding pair                    & 19,174 \\
\quad in non-binding pairs only                & 14,335 \\
non-binding rows carried by those TCRs         & 22,314 \\
\addlinespace
times a TCR is reused across non-binders, median & 2 \\
\quad 90th percentile                          & 4 \\
\quad maximum                                  & 26 \\
pairs labeled both binding and non-binding     & 0 \\
\bottomrule
\end{tabular}
\end{table}

\textbf{Fold sizes.} Every epitope forms its own cluster and each cluster is assigned to exactly one fold. No epitope appears in more than one fold.

\begin{table}[H]
\centering
\small\setlength{\tabcolsep}{8pt}
\caption{The five epitope-disjoint folds.}
\label{tab:folds}
\begin{tabular}{lccccc}
\toprule
fold & rows & positives & negatives & pos\,:\,neg & epitopes \\
\midrule
0 & 12,327 & 1,876 & 10,451 & 1 : 5.6 & 65 \\
1 &  9,342 & 1,593 &  7,749 & 1 : 4.9 & 67 \\
2 &  9,337 & 1,531 &  7,806 & 1 : 5.1 & 69 \\
3 &  8,543 & 1,329 &  7,214 & 1 : 5.4 & 70 \\
4 &  6,360 & 1,042 &  5,318 & 1 : 5.1 & 72 \\
\midrule
total & 45,909 & 7,371 & 38,538 & 1 : 5.2 & 343 \\
\bottomrule
\end{tabular}
\end{table}

\textbf{Rotation.} For test fold $F$ the validation fold is $F{+}1 \bmod 5$ and the remaining three train. Checkpoint selection is on validation micro-AUROC, a three-epoch rolling mean with no checkpoint before epoch 10. The test fold is consulted neither in training nor in selection.

\textbf{Per fold, then averaged.} Every metric in the paper is computed within a fold and the five values are averaged. None is computed on the five folds pooled. Each fold is a separately trained model on its own logit scale, so pooling lets between-model score offsets drive the AUROC and silently weights by fold size. An earlier version of this analysis pooled, and reported one configuration at micro-AUROC 0.676 against its true 0.642.

\textbf{Row coverage.} Each method is scored on every row it covers. Coverage differs between methods by under 0.5\% of the fold, because a few structures are missing under one tool and the sequence baselines drop rows they cannot parse. \sys emits a logit and the baselines emit probabilities. Only the \sys side is passed through a sigmoid, which is monotone, so AUROC and AUPRC are untouched and MCC@0.5 on the probability equals MCC@0 on the logit.

\textbf{Degenerate folds.} Every fold of every seed is kept in \cref{tab:main}, including the ones on which a baseline fails. SageTCR collapses to a constant logit on fold~0 under all four structure sources, with a score standard deviation of zero and as few as one distinct value across 12{,}318 candidate pairs, so its AUROC there is 0.5 and its MCC at a fixed threshold is 0. TRAP does not go fully constant but its scores span a range too narrow to cross the 0.5 threshold on folds~0 and~4, which also drives MCC@0.5 to 0. Dropping those folds would report each baseline on the folds it happens to survive, so we keep them and report the mean over all five.

\vspace{0.1in}

\subsection{Metric Definitions}
\label{app:metrics}
\label{app:mcc}
\vspace{0.1in}

\textbf{Ranking.} For a scored set $S$, $\mathrm{AUROC}(S)=\Pr(\hat{y}^{+}\!>\!\hat{y}^{-})$ over random positive--negative pairs and $\mathrm{AUPRC}(S)=\int_{0}^{1}\mathrm{Prec}(r)\,\mathrm{d}r$. Writing $\mathcal{E}$ for the epitopes of a fold and $S_e$ for the candidates of epitope $e$,
\[
\text{micro-}M=M\Big(\bigcup\nolimits_{e\in\mathcal{E}}S_e\Big),\qquad
\text{macro-}M=\frac{1}{|\mathcal{E}|}\sum_{e\in\mathcal{E}}M(S_e),\qquad
M\in\{\mathrm{AUROC},\mathrm{AUPRC}\},
\]
so micro is dominated by the frequent epitopes and macro counts every epitope once. A precision metric starts at the class prevalence rather than at zero, which is $0.161$ here, so a loss is read against the margin a method holds above that level.

\textbf{Classification.} At a threshold $t$ a pair counts as predicted positive when $\hat{y}\ge t$. Writing $\mathrm{TP}$, $\mathrm{TN}$, $\mathrm{FP}$ and $\mathrm{FN}$ for the four counts this produces,
\[
\mathrm{MCC}=\frac{\mathrm{TP}\cdot\mathrm{TN}-\mathrm{FP}\cdot\mathrm{FN}}
{\sqrt{(\mathrm{TP}+\mathrm{FP})(\mathrm{TP}+\mathrm{FN})(\mathrm{TN}+\mathrm{FP})(\mathrm{TN}+\mathrm{FN})}},
\qquad
\mathrm{MCC}@0.5=\mathrm{MCC}\big|_{t=0.5}.
\]
MCC runs over $[-1,1]$ and is $0$ for a constant or random prediction, so it is not inflated by the class imbalance the way accuracy is. We fix $t=0.5$ for every method, which is the cut a system runs at when no fold is held out for calibration. We also report $\mathrm{MCC}^{\ast}=\max_{t}\mathrm{MCC}(t)$ taken on the test fold itself. No deployed system has that threshold, so $\mathrm{MCC}^{\ast}$ is an oracle value and we read it only as a bound on what threshold tuning could buy a baseline. We also report $\mathrm{MCC}@t$, where $t$ is fitted on the four folds that are not being scored. It is a third way of choosing the threshold, and it reads nothing from the fold being scored. \Cref{tab:threshold} gives all three side by side, with the calibration and the operating point each one produces.

\textbf{Stability.} Stability is measured on rankings rather than scores. For two rankings of the same $n$ candidates,
\[\tau_b=\frac{C-D}{\sqrt{(C+D+T_x)(C+D+T_y)}},\qquad
\text{order held}=\frac{1+\tau_b}{2}\]
with $C$ and $D$ the concordant and discordant pairs and $T_x,T_y$ those tied in one ranking only \citep{kendall1938tau, kendall1945ties}. Across all $m=4$ rankings at once we use Kendall's concordance $W=12S/\big(m^{2}(n^{3}-n)\big)$ \citep{kendall1939mrankings}, where $S$ sums the squared deviations of each candidate's rank sum from the mean. $W=1$ is perfect agreement and $W=0$ is none. Overlap@$k$ is the intersection of two top-$k$ sets divided by $k$. Intervals are bootstrapped over epitope-folds.

\begin{table}[H]
\centering
\footnotesize\setlength{\tabcolsep}{3.4pt}
\caption{\textbf{Three ways to set the decision threshold, with the calibration and the operating
point each one produces.} $\mathrm{MCC}@0.5$ uses the fixed cut. $\mathrm{MCC}@t$ uses a threshold
fitted on the four folds that are not being scored, so it reads neither the fixed cut nor the scored
fold.
$\mathrm{MCC}^{\ast}$ uses the threshold that maximizes MCC on the scored fold, which is an oracle.
The first and third columns repeat \cref{tab:main}. Precision, recall and predicted-positive rate
are read at $t$. Brier and ECE are threshold free, and ECE uses ten equal-mass bins. $t$ is the
median of the four per-fold thresholds. Means over five folds and three seeds. Best
$\mathrm{MCC}@t$ per group in bold.}
\label{tab:threshold}
\resizebox{\textwidth}{!}{%
\begin{tabular}{llccccccccc}
\toprule
structures & method & MCC@0.5 & MCC@$t$ & MCC$^{\ast}$ & precision & recall & pred.\ pos. & Brier & ECE & $t$ \\
\midrule
\multirow{5}{*}{tFold-TCR} & \syb & 0.251 & \textbf{0.233} & 0.292 & 0.517 & 0.206 & 0.068 & 0.126 & 0.070 & 0.452 \\
 & STAG-LLM & 0.166 & 0.170 & 0.264 & 0.442 & 0.132 & 0.049 & 0.188 & 0.216 & 0.720 \\
 & SageTCR & 0.122 & 0.137 & 0.182 & 0.399 & 0.101 & 0.039 & 0.133 & 0.040 & 0.318 \\
 & Abeer & 0.071 & 0.108 & 0.157 & 0.293 & 0.120 & 0.058 & 0.237 & 0.303 & 0.745 \\
 & TRAP & 0.034 & 0.046 & 0.095 & 0.409 & 0.015 & 0.006 & 0.200 & 0.226 & 0.639 \\
\addlinespace[2pt]\cmidrule(lr){1-11}\addlinespace[2pt]
\multirow{5}{*}{ESMFold2} & \syb & 0.273 & \textbf{0.257} & 0.304 & 0.505 & 0.253 & 0.086 & 0.126 & 0.077 & 0.435 \\
 & STAG-LLM & 0.172 & 0.224 & 0.245 & 0.465 & 0.224 & 0.079 & 0.192 & 0.210 & 0.748 \\
 & SageTCR & 0.115 & 0.117 & 0.174 & 0.336 & 0.105 & 0.049 & 0.135 & 0.041 & 0.298 \\
 & Abeer & 0.108 & 0.110 & 0.188 & 0.297 & 0.115 & 0.054 & 0.231 & 0.300 & 0.706 \\
 & TRAP & 0.041 & 0.061 & 0.096 & 0.402 & 0.019 & 0.007 & 0.196 & 0.224 & 0.621 \\
\addlinespace[2pt]\cmidrule(lr){1-11}\addlinespace[2pt]
\multirow{5}{*}{TCRmodel2} & \syb & 0.267 & \textbf{0.209} & 0.301 & 0.535 & 0.180 & 0.060 & 0.127 & 0.082 & 0.487 \\
 & STAG-LLM & 0.153 & 0.169 & 0.264 & 0.422 & 0.164 & 0.066 & 0.200 & 0.237 & 0.736 \\
 & SageTCR & 0.112 & 0.128 & 0.174 & 0.338 & 0.119 & 0.052 & 0.134 & 0.043 & 0.248 \\
 & Abeer & 0.145 & 0.123 & 0.236 & 0.329 & 0.134 & 0.066 & 0.224 & 0.289 & 0.701 \\
 & TRAP & 0.019 & 0.035 & 0.077 & 0.253 & 0.015 & 0.007 & 0.201 & 0.224 & 0.641 \\
\addlinespace[2pt]\cmidrule(lr){1-11}\addlinespace[2pt]
\multirow{5}{*}{Boltz-2} & \syb & 0.278 & \textbf{0.263} & 0.311 & 0.623 & 0.192 & 0.048 & 0.127 & 0.079 & 0.559 \\
 & STAG-LLM & 0.169 & 0.183 & 0.254 & 0.454 & 0.160 & 0.058 & 0.198 & 0.218 & 0.800 \\
 & SageTCR & 0.115 & 0.114 & 0.146 & 0.310 & 0.122 & 0.060 & 0.134 & 0.046 & 0.257 \\
 & Abeer & 0.054 & 0.090 & 0.188 & 0.251 & 0.145 & 0.089 & 0.251 & 0.319 & 0.683 \\
 & TRAP & 0.016 & 0.042 & 0.082 & 0.270 & 0.014 & 0.006 & 0.192 & 0.206 & 0.661 \\
\midrule
\multirow{2}{*}{-} & NetTCR-2.2 & 0.207 & 0.241 & 0.257 & 0.493 & 0.212 & 0.065 & 0.128 & 0.060 & 0.360 \\
 & ERGO-II & 0.061 & 0.199 & 0.224 & 0.474 & 0.161 & 0.056 & 0.282 & 0.373 & 0.799 \\
\bottomrule
\end{tabular}%
}
\end{table}

\subsection{Hyperparameter}
\label{app:training}
\vspace{0.1in}

The objective sums a cross-entropy term and a Bradley--Terry ranking term, both defined in the method. Negatives are redrawn every epoch rather than fixed once, and checkpoints are selected on a rolling three-epoch mean of validation micro-AUROC, because taking the single best epoch selects on the validation fold.

\begin{table}[H]
\centering
\small\setlength{\tabcolsep}{7pt}
\caption{Every hyperparameter of the reported \sys runs. Values are shared by all arms. The arms differ only in the flag named in \cref{app:ablations}.}
\label{tab:hparams}
\begin{tabular}{lll}
\toprule
group & setting & value \\
\midrule
graph      & node descriptor $x$                & $37$ invariant scalars \\
           & language model                     & ESM-C 600M, frozen, $1152$-d \\
           & learned bottleneck $W_e$           & $1152\rightarrow128$ \\
           & encoder input $h^{(0)}$            & $165$-d \\
           & neighbors $k$                     & $30$, rebuilt every layer \\
           & edge feature $a_{ij}$              & $16$ RBF over $[0,20]$~\AA{} $+$ $d_0/d_{ij}$ ($d_0=4$~\AA{}) $+$ $5$ biochemical \\
\midrule
encoder    & layers $L$                         & $2$ \\
           & width                              & $256$ \\
           & activation                         & SiLU \\
           & coordinate update                  & Tanh-bounded coefficient \\
\midrule
head       & attention heads                    & $4\times32$ \\
           & contact dimension                  & $128$ \\
           & pooling                            & mask-aware mean $+$ max \\
\midrule
optimiser  & algorithm                          & AdamW \citep{loshchilov2019adamw} \\
           & learning rate                      & $3\times10^{-5}$ \\
           & weight decay                       & $0.01$ \\
           & schedule                           & $5$ warm-up epochs, then cosine \\
           & epochs                             & $100$ \\
\midrule
sampling   & negatives per positive             & $3$, redrawn every epoch \\
           & positive sampling                  & uniform at natural epitope frequency \\
\midrule
selection  & criterion                          & rolling mean (window $3$) of validation micro-AUROC \\
           & floor                              & epoch $10$ \\
\bottomrule
\end{tabular}
\end{table}

\subsection{Binding Prediction, Fold by Fold}
\label{app:perfold}

Every mean in \cref{tab:main} is an average of five per-fold values, and the tables below
give all of them. The spread across folds is wide enough to be worth stating plainly: fold~0 is the easiest fold on the micro
metrics for every method that separates the classes at all, and the three weakest methods peak
elsewhere only because they barely rank on fold~0 either. On the macro metrics, which average
over epitopes and so discount that one large epitope, no method peaks on fold~0.

The last column of each table is MCC at the threshold that maximizes it on that fold, the
oracle value of \cref{tab:threshold}, so the cost of the fixed cut can be read fold by fold.

\textbf{Fold by fold against the strongest baseline.} The means of \cref{tab:main} say which
method is ahead on average. \Cref{tab:paired} asks the same question one fold at a time, pairing on
the fold because that is the only unit two methods share. Seeds are not a shared unit, since each
method draws its own, so seeds are averaged within a fold before pairing. \sys wins all five folds
in 16 of the 24 cells. The exceptions are micro-AUROC under ESMFold2 and Boltz-2, where the median
difference is $+0.005$ and $+0.009$ and three folds go its way, micro-AUPRC under tFold-TCR and
ESMFold2, where four folds do, and $\mathrm{MCC}^{\ast}$, where no tool reaches five. Both exceptions sit where the main text already reads the lead as small, on the
whole-list metric and after the threshold has been tuned on the test fold.

\begin{table}[H]
\centering
\footnotesize\setlength{\tabcolsep}{6pt}
\caption{\textbf{Fold-by-fold comparison against the strongest baseline in each cell.} Seeds are
averaged within a fold first, so each cell is five paired observations, one per epitope-disjoint
fold. The entry gives the folds \sys wins and the median paired difference. Five folds out of five
is a one-sided Wilcoxon $p$ of $0.031$, which is the smallest value five paired observations can
produce, so no cell can report a stronger test. The baseline is STAG-LLM except for macro-AUPRC
under tFold-TCR, ESMFold2 and TCRmodel2 and macro-AUROC under tFold-TCR, which are against TRAP,
and micro-AUROC under TCRmodel2, which is against Abeer.}
\label{tab:paired}
\begin{tabular}{lcccc}
\toprule
metric & tFold-TCR & ESMFold2 & TCRmodel2 & Boltz-2 \\
\midrule
macro-AUROC & \textbf{5/5}\,\small{(+0.100)} & \textbf{5/5}\,\small{(+0.045)} & \textbf{5/5}\,\small{(+0.130)} & \textbf{5/5}\,\small{(+0.036)} \\
macro-AUPRC & \textbf{5/5}\,\small{(+0.067)} & \textbf{5/5}\,\small{(+0.060)} & \textbf{5/5}\,\small{(+0.131)} & \textbf{5/5}\,\small{(+0.064)} \\
micro-AUROC & \textbf{5/5}\,\small{(+0.062)} & 3/5\,\small{(+0.005)} & \textbf{5/5}\,\small{(+0.075)} & 3/5\,\small{(+0.009)} \\
micro-AUPRC & 4/5\,\small{(+0.060)} & 4/5\,\small{(+0.031)} & \textbf{5/5}\,\small{(+0.058)} & \textbf{5/5}\,\small{(+0.022)} \\
MCC@0.5 & \textbf{5/5}\,\small{(+0.082)} & \textbf{5/5}\,\small{(+0.048)} & \textbf{5/5}\,\small{(+0.075)} & \textbf{5/5}\,\small{(+0.068)} \\
MCC$^{\ast}$ & 3/5\,\small{(+0.017)} & 4/5\,\small{(+0.026)} & 4/5\,\small{(+0.017)} & 4/5\,\small{(+0.025)} \\
\bottomrule
\end{tabular}
\end{table}

\begin{table}[H]
\centering
\footnotesize\setlength{\tabcolsep}{2.2pt}
\caption{\textbf{Per-fold results, structures from tFold-TCR}. Each cell is the mean over three training seeds, $\pm$ the spread across those seeds. Every \emph{mean} row averages the five folds within a seed first, so it matches \cref{tab:main}. No metric is computed on the five folds pooled. Candidate counts differ between methods by under 0.5\%, so they are not repeated here.}
\label{tab:perfold-tfold}
\begin{tabular}{lcccccc}
\toprule
method / fold & macro-AUROC & macro-AUPRC & micro-AUROC & micro-AUPRC & MCC@0.5 & MCC$^{\ast}$ \\
\midrule
\multicolumn{7}{l}{\textit{\syb}} \\
\quad 0 & 0.713\,\sd{0.014} & 0.530\,\sd{0.023} & 0.855\,\sd{0.010} & 0.656\,\sd{0.049} & 0.512\,\sd{0.279} & 0.648\,\sd{0.045} \\
\quad 1 & 0.776\,\sd{0.024} & 0.611\,\sd{0.025} & 0.628\,\sd{0.005} & 0.278\,\sd{0.008} & 0.144\,\sd{0.010} & 0.154\,\sd{0.004} \\
\quad 2 & 0.788\,\sd{0.010} & 0.615\,\sd{0.002} & 0.582\,\sd{0.012} & 0.283\,\sd{0.015} & 0.172\,\sd{0.020} & 0.208\,\sd{0.023} \\
\quad 3 & 0.777\,\sd{0.014} & 0.613\,\sd{0.018} & 0.625\,\sd{0.013} & 0.273\,\sd{0.024} & 0.152\,\sd{0.033} & 0.169\,\sd{0.031} \\
\quad 4 & 0.736\,\sd{0.008} & 0.555\,\sd{0.004} & 0.639\,\sd{0.002} & 0.343\,\sd{0.004} & 0.275\,\sd{0.016} & 0.279\,\sd{0.014} \\
\cmidrule(l){1-7}
\quad \textit{mean} & 0.758\,\sd{0.010} & 0.585\,\sd{0.008} & 0.666\,\sd{0.003} & 0.366\,\sd{0.014} & 0.251\,\sd{0.059} & 0.292\,\sd{0.017} \\
\addlinespace[2pt]\midrule
\multicolumn{7}{l}{\textit{STAG-LLM}} \\
\quad 0 & 0.620\,\sd{0.021} & 0.480\,\sd{0.036} & 0.792\,\sd{0.028} & 0.455\,\sd{0.061} & 0.445\,\sd{0.107} & 0.474\,\sd{0.116} \\
\quad 1 & 0.642\,\sd{0.058} & 0.475\,\sd{0.067} & 0.490\,\sd{0.015} & 0.242\,\sd{0.026} & 0.036\,\sd{0.037} & 0.192\,\sd{0.033} \\
\quad 2 & 0.649\,\sd{0.027} & 0.496\,\sd{0.029} & 0.532\,\sd{0.005} & 0.223\,\sd{0.008} & 0.090\,\sd{0.033} & 0.171\,\sd{0.016} \\
\quad 3 & 0.642\,\sd{0.015} & 0.500\,\sd{0.014} & 0.574\,\sd{0.002} & 0.275\,\sd{0.020} & 0.127\,\sd{0.031} & 0.221\,\sd{0.015} \\
\quad 4 & 0.546\,\sd{0.031} & 0.411\,\sd{0.022} & 0.522\,\sd{0.021} & 0.264\,\sd{0.002} & 0.134\,\sd{0.114} & 0.263\,\sd{0.003} \\
\cmidrule(l){1-7}
\quad \textit{mean} & 0.620\,\sd{0.018} & 0.473\,\sd{0.020} & 0.582\,\sd{0.008} & 0.292\,\sd{0.006} & 0.166\,\sd{0.024} & 0.264\,\sd{0.018} \\
\addlinespace[2pt]\midrule
\multicolumn{7}{l}{\textit{SageTCR}} \\
\quad 0 & 0.512\,\sd{0.016} & 0.243\,\sd{0.079} & 0.519\,\sd{0.028} & 0.156\,\sd{0.006} & 0.000\,\sd{0.000} & 0.120\,\sd{0.113} \\
\quad 1 & 0.643\,\sd{0.027} & 0.474\,\sd{0.020} & 0.510\,\sd{0.017} & 0.251\,\sd{0.019} & 0.176\,\sd{0.047} & 0.199\,\sd{0.023} \\
\quad 2 & 0.668\,\sd{0.008} & 0.511\,\sd{0.030} & 0.554\,\sd{0.022} & 0.253\,\sd{0.021} & 0.128\,\sd{0.062} & 0.189\,\sd{0.025} \\
\quad 3 & 0.671\,\sd{0.018} & 0.525\,\sd{0.027} & 0.568\,\sd{0.016} & 0.260\,\sd{0.029} & 0.154\,\sd{0.069} & 0.180\,\sd{0.041} \\
\quad 4 & 0.615\,\sd{0.045} & 0.459\,\sd{0.020} & 0.584\,\sd{0.018} & 0.282\,\sd{0.012} & 0.149\,\sd{0.004} & 0.223\,\sd{0.005} \\
\cmidrule(l){1-7}
\quad \textit{mean} & 0.622\,\sd{0.009} & 0.443\,\sd{0.004} & 0.547\,\sd{0.011} & 0.240\,\sd{0.007} & 0.122\,\sd{0.027} & 0.182\,\sd{0.036} \\
\addlinespace[2pt]\midrule
\multicolumn{7}{l}{\textit{Abeer}} \\
\quad 0 & 0.492\,\sd{0.049} & 0.341\,\sd{0.036} & 0.601\,\sd{0.085} & 0.202\,\sd{0.065} & 0.087\,\sd{0.096} & 0.130\,\sd{0.094} \\
\quad 1 & 0.618\,\sd{0.065} & 0.455\,\sd{0.063} & 0.494\,\sd{0.030} & 0.225\,\sd{0.021} & 0.032\,\sd{0.060} & 0.158\,\sd{0.039} \\
\quad 2 & 0.672\,\sd{0.016} & 0.506\,\sd{0.021} & 0.546\,\sd{0.009} & 0.221\,\sd{0.008} & 0.072\,\sd{0.016} & 0.143\,\sd{0.016} \\
\quad 3 & 0.613\,\sd{0.028} & 0.465\,\sd{0.031} & 0.558\,\sd{0.036} & 0.205\,\sd{0.027} & 0.065\,\sd{0.017} & 0.124\,\sd{0.056} \\
\quad 4 & 0.557\,\sd{0.007} & 0.402\,\sd{0.008} & 0.550\,\sd{0.043} & 0.274\,\sd{0.018} & 0.100\,\sd{0.050} & 0.230\,\sd{0.020} \\
\cmidrule(l){1-7}
\quad \textit{mean} & 0.591\,\sd{0.008} & 0.434\,\sd{0.006} & 0.550\,\sd{0.011} & 0.225\,\sd{0.014} & 0.071\,\sd{0.007} & 0.157\,\sd{0.024} \\
\addlinespace[2pt]\midrule
\multicolumn{7}{l}{\textit{TRAP}} \\
\quad 0 & 0.573\,\sd{0.043} & 0.419\,\sd{0.058} & 0.552\,\sd{0.087} & 0.187\,\sd{0.058} & 0.000\,\sd{0.000} & 0.110\,\sd{0.094} \\
\quad 1 & 0.682\,\sd{0.010} & 0.554\,\sd{0.026} & 0.423\,\sd{0.010} & 0.170\,\sd{0.005} & 0.058\,\sd{0.018} & 0.101\,\sd{0.010} \\
\quad 2 & 0.703\,\sd{0.030} & 0.577\,\sd{0.020} & 0.511\,\sd{0.022} & 0.191\,\sd{0.010} & 0.038\,\sd{0.009} & 0.094\,\sd{0.006} \\
\quad 3 & 0.662\,\sd{0.029} & 0.525\,\sd{0.051} & 0.515\,\sd{0.014} & 0.174\,\sd{0.012} & 0.043\,\sd{0.027} & 0.084\,\sd{0.029} \\
\quad 4 & 0.635\,\sd{0.045} & 0.488\,\sd{0.055} & 0.494\,\sd{0.006} & 0.188\,\sd{0.019} & 0.029\,\sd{0.050} & 0.088\,\sd{0.051} \\
\cmidrule(l){1-7}
\quad \textit{mean} & 0.651\,\sd{0.015} & 0.513\,\sd{0.014} & 0.499\,\sd{0.010} & 0.182\,\sd{0.010} & 0.034\,\sd{0.000} & 0.095\,\sd{0.024} \\
\bottomrule
\end{tabular}
\end{table}

\begin{table}[H]
\centering
\footnotesize\setlength{\tabcolsep}{2.2pt}
\caption{\textbf{Per-fold results, structures from ESMFold2}. Conventions as in \cref{tab:perfold-tfold}.}
\label{tab:perfold-esmfold2}
\begin{tabular}{lcccccc}
\toprule
method / fold & macro-AUROC & macro-AUPRC & micro-AUROC & micro-AUPRC & MCC@0.5 & MCC$^{\ast}$ \\
\midrule
\multicolumn{7}{l}{\textit{\syb}} \\
\quad 0 & 0.737\,\sd{0.015} & 0.563\,\sd{0.020} & 0.855\,\sd{0.013} & 0.677\,\sd{0.045} & 0.675\,\sd{0.051} & 0.690\,\sd{0.036} \\
\quad 1 & 0.774\,\sd{0.014} & 0.605\,\sd{0.008} & 0.660\,\sd{0.003} & 0.324\,\sd{0.017} & 0.161\,\sd{0.041} & 0.206\,\sd{0.014} \\
\quad 2 & 0.773\,\sd{0.006} & 0.587\,\sd{0.011} & 0.577\,\sd{0.005} & 0.263\,\sd{0.013} & 0.165\,\sd{0.003} & 0.184\,\sd{0.012} \\
\quad 3 & 0.781\,\sd{0.016} & 0.617\,\sd{0.013} & 0.621\,\sd{0.012} & 0.262\,\sd{0.017} & 0.137\,\sd{0.041} & 0.174\,\sd{0.011} \\
\quad 4 & 0.729\,\sd{0.004} & 0.545\,\sd{0.006} & 0.639\,\sd{0.005} & 0.334\,\sd{0.007} & 0.228\,\sd{0.022} & 0.267\,\sd{0.015} \\
\cmidrule(l){1-7}
\quad \textit{mean} & 0.759\,\sd{0.004} & 0.584\,\sd{0.003} & 0.670\,\sd{0.005} & 0.372\,\sd{0.013} & 0.273\,\sd{0.029} & 0.304\,\sd{0.011} \\
\addlinespace[2pt]\midrule
\multicolumn{7}{l}{\textit{STAG-LLM}} \\
\quad 0 & 0.705\,\sd{0.010} & 0.489\,\sd{0.010} & 0.819\,\sd{0.017} & 0.463\,\sd{0.052} & 0.360\,\sd{0.031} & 0.472\,\sd{0.049} \\
\quad 1 & 0.758\,\sd{0.013} & 0.550\,\sd{0.025} & 0.608\,\sd{0.024} & 0.293\,\sd{0.016} & 0.131\,\sd{0.003} & 0.190\,\sd{0.025} \\
\quad 2 & 0.728\,\sd{0.006} & 0.543\,\sd{0.009} & 0.603\,\sd{0.012} & 0.237\,\sd{0.009} & 0.117\,\sd{0.020} & 0.126\,\sd{0.019} \\
\quad 3 & 0.714\,\sd{0.024} & 0.523\,\sd{0.019} & 0.633\,\sd{0.003} & 0.271\,\sd{0.030} & 0.131\,\sd{0.021} & 0.194\,\sd{0.010} \\
\quad 4 & 0.664\,\sd{0.036} & 0.432\,\sd{0.056} & 0.635\,\sd{0.003} & 0.297\,\sd{0.036} & 0.123\,\sd{0.025} & 0.242\,\sd{0.012} \\
\cmidrule(l){1-7}
\quad \textit{mean} & 0.714\,\sd{0.013} & 0.507\,\sd{0.016} & 0.659\,\sd{0.003} & 0.312\,\sd{0.007} & 0.172\,\sd{0.001} & 0.245\,\sd{0.004} \\
\addlinespace[2pt]\midrule
\multicolumn{7}{l}{\textit{SageTCR}} \\
\quad 0 & 0.486\,\sd{0.013} & 0.260\,\sd{0.035} & 0.501\,\sd{0.056} & 0.150\,\sd{0.020} & 0.000\,\sd{0.000} & 0.182\,\sd{0.132} \\
\quad 1 & 0.634\,\sd{0.014} & 0.458\,\sd{0.031} & 0.512\,\sd{0.031} & 0.243\,\sd{0.007} & 0.163\,\sd{0.020} & 0.175\,\sd{0.010} \\
\quad 2 & 0.651\,\sd{0.033} & 0.487\,\sd{0.021} & 0.541\,\sd{0.011} & 0.237\,\sd{0.016} & 0.143\,\sd{0.038} & 0.165\,\sd{0.027} \\
\quad 3 & 0.680\,\sd{0.035} & 0.519\,\sd{0.037} & 0.568\,\sd{0.010} & 0.231\,\sd{0.010} & 0.129\,\sd{0.021} & 0.148\,\sd{0.004} \\
\quad 4 & 0.639\,\sd{0.023} & 0.460\,\sd{0.011} & 0.604\,\sd{0.061} & 0.274\,\sd{0.028} & 0.141\,\sd{0.028} & 0.202\,\sd{0.006} \\
\cmidrule(l){1-7}
\quad \textit{mean} & 0.618\,\sd{0.002} & 0.437\,\sd{0.014} & 0.545\,\sd{0.015} & 0.227\,\sd{0.003} & 0.115\,\sd{0.016} & 0.174\,\sd{0.030} \\
\addlinespace[2pt]\midrule
\multicolumn{7}{l}{\textit{Abeer}} \\
\quad 0 & 0.484\,\sd{0.024} & 0.329\,\sd{0.013} & 0.723\,\sd{0.043} & 0.388\,\sd{0.152} & 0.303\,\sd{0.196} & 0.344\,\sd{0.150} \\
\quad 1 & 0.614\,\sd{0.014} & 0.460\,\sd{0.032} & 0.514\,\sd{0.025} & 0.220\,\sd{0.010} & 0.009\,\sd{0.008} & 0.144\,\sd{0.006} \\
\quad 2 & 0.609\,\sd{0.026} & 0.438\,\sd{0.029} & 0.544\,\sd{0.001} & 0.215\,\sd{0.005} & 0.073\,\sd{0.024} & 0.132\,\sd{0.008} \\
\quad 3 & 0.636\,\sd{0.025} & 0.486\,\sd{0.022} & 0.546\,\sd{0.005} & 0.200\,\sd{0.006} & 0.049\,\sd{0.012} & 0.125\,\sd{0.014} \\
\quad 4 & 0.609\,\sd{0.052} & 0.442\,\sd{0.047} & 0.567\,\sd{0.002} & 0.260\,\sd{0.027} & 0.106\,\sd{0.014} & 0.196\,\sd{0.033} \\
\cmidrule(l){1-7}
\quad \textit{mean} & 0.590\,\sd{0.003} & 0.431\,\sd{0.002} & 0.579\,\sd{0.008} & 0.257\,\sd{0.028} & 0.108\,\sd{0.049} & 0.188\,\sd{0.025} \\
\addlinespace[2pt]\midrule
\multicolumn{7}{l}{\textit{TRAP}} \\
\quad 0 & 0.592\,\sd{0.008} & 0.438\,\sd{0.017} & 0.500\,\sd{0.124} & 0.155\,\sd{0.032} & 0.000\,\sd{0.000} & 0.102\,\sd{0.090} \\
\quad 1 & 0.657\,\sd{0.031} & 0.515\,\sd{0.042} & 0.439\,\sd{0.009} & 0.174\,\sd{0.006} & 0.064\,\sd{0.008} & 0.096\,\sd{0.010} \\
\quad 2 & 0.680\,\sd{0.009} & 0.551\,\sd{0.015} & 0.536\,\sd{0.007} & 0.194\,\sd{0.005} & 0.030\,\sd{0.004} & 0.081\,\sd{0.023} \\
\quad 3 & 0.678\,\sd{0.025} & 0.557\,\sd{0.019} & 0.528\,\sd{0.011} & 0.187\,\sd{0.005} & 0.045\,\sd{0.020} & 0.098\,\sd{0.012} \\
\quad 4 & 0.639\,\sd{0.041} & 0.506\,\sd{0.054} & 0.484\,\sd{0.010} & 0.190\,\sd{0.021} & 0.064\,\sd{0.056} & 0.102\,\sd{0.062} \\
\cmidrule(l){1-7}
\quad \textit{mean} & 0.649\,\sd{0.006} & 0.514\,\sd{0.012} & 0.497\,\sd{0.025} & 0.180\,\sd{0.003} & 0.041\,\sd{0.017} & 0.096\,\sd{0.008} \\
\bottomrule
\end{tabular}
\end{table}

\begin{table}[H]
\centering
\footnotesize\setlength{\tabcolsep}{2.2pt}
\caption{\textbf{Per-fold results, structures from TCRmodel2}. Conventions as in \cref{tab:perfold-tfold}.}
\label{tab:perfold-tcrmodel2}
\begin{tabular}{lcccccc}
\toprule
method / fold & macro-AUROC & macro-AUPRC & micro-AUROC & micro-AUPRC & MCC@0.5 & MCC$^{\ast}$ \\
\midrule
\multicolumn{7}{l}{\textit{\syb}} \\
\quad 0 & 0.729\,\sd{0.010} & 0.540\,\sd{0.006} & 0.856\,\sd{0.009} & 0.665\,\sd{0.025} & 0.623\,\sd{0.034} & 0.681\,\sd{0.021} \\
\quad 1 & 0.799\,\sd{0.018} & 0.639\,\sd{0.027} & 0.647\,\sd{0.003} & 0.283\,\sd{0.014} & 0.134\,\sd{0.027} & 0.175\,\sd{0.002} \\
\quad 2 & 0.794\,\sd{0.006} & 0.605\,\sd{0.018} & 0.568\,\sd{0.007} & 0.261\,\sd{0.017} & 0.172\,\sd{0.014} & 0.185\,\sd{0.020} \\
\quad 3 & 0.775\,\sd{0.022} & 0.626\,\sd{0.021} & 0.620\,\sd{0.002} & 0.275\,\sd{0.023} & 0.149\,\sd{0.045} & 0.179\,\sd{0.015} \\
\quad 4 & 0.747\,\sd{0.009} & 0.554\,\sd{0.018} & 0.648\,\sd{0.015} & 0.351\,\sd{0.017} & 0.256\,\sd{0.047} & 0.284\,\sd{0.009} \\
\cmidrule(l){1-7}
\quad \textit{mean} & 0.769\,\sd{0.007} & 0.593\,\sd{0.004} & 0.668\,\sd{0.003} & 0.367\,\sd{0.002} & 0.267\,\sd{0.009} & 0.301\,\sd{0.003} \\
\addlinespace[2pt]\midrule
\multicolumn{7}{l}{\textit{STAG-LLM}} \\
\quad 0 & 0.599\,\sd{0.042} & 0.457\,\sd{0.028} & 0.794\,\sd{0.020} & 0.481\,\sd{0.068} & 0.372\,\sd{0.110} & 0.492\,\sd{0.080} \\
\quad 1 & 0.640\,\sd{0.051} & 0.469\,\sd{0.035} & 0.478\,\sd{0.014} & 0.226\,\sd{0.026} & 0.060\,\sd{0.007} & 0.173\,\sd{0.026} \\
\quad 2 & 0.673\,\sd{0.016} & 0.518\,\sd{0.004} & 0.535\,\sd{0.005} & 0.224\,\sd{0.004} & 0.060\,\sd{0.035} & 0.170\,\sd{0.010} \\
\quad 3 & 0.669\,\sd{0.021} & 0.512\,\sd{0.022} & 0.569\,\sd{0.010} & 0.271\,\sd{0.018} & 0.078\,\sd{0.024} & 0.222\,\sd{0.020} \\
\quad 4 & 0.577\,\sd{0.030} & 0.427\,\sd{0.019} & 0.553\,\sd{0.013} & 0.272\,\sd{0.023} & 0.196\,\sd{0.018} & 0.262\,\sd{0.017} \\
\cmidrule(l){1-7}
\quad \textit{mean} & 0.632\,\sd{0.009} & 0.477\,\sd{0.005} & 0.586\,\sd{0.008} & 0.295\,\sd{0.018} & 0.153\,\sd{0.029} & 0.264\,\sd{0.019} \\
\addlinespace[2pt]\midrule
\multicolumn{7}{l}{\textit{SageTCR}} \\
\quad 0 & 0.470\,\sd{0.035} & 0.187\,\sd{0.026} & 0.465\,\sd{0.048} & 0.145\,\sd{0.009} & 0.000\,\sd{0.000} & 0.112\,\sd{0.087} \\
\quad 1 & 0.624\,\sd{0.036} & 0.473\,\sd{0.023} & 0.522\,\sd{0.013} & 0.249\,\sd{0.013} & 0.105\,\sd{0.045} & 0.176\,\sd{0.023} \\
\quad 2 & 0.617\,\sd{0.048} & 0.465\,\sd{0.049} & 0.548\,\sd{0.007} & 0.241\,\sd{0.027} & 0.127\,\sd{0.053} & 0.167\,\sd{0.034} \\
\quad 3 & 0.677\,\sd{0.017} & 0.524\,\sd{0.019} & 0.572\,\sd{0.007} & 0.249\,\sd{0.010} & 0.138\,\sd{0.009} & 0.174\,\sd{0.007} \\
\quad 4 & 0.593\,\sd{0.035} & 0.439\,\sd{0.016} & 0.578\,\sd{0.001} & 0.292\,\sd{0.010} & 0.190\,\sd{0.020} & 0.238\,\sd{0.026} \\
\cmidrule(l){1-7}
\quad \textit{mean} & 0.596\,\sd{0.028} & 0.418\,\sd{0.011} & 0.537\,\sd{0.012} & 0.235\,\sd{0.004} & 0.112\,\sd{0.018} & 0.174\,\sd{0.011} \\
\addlinespace[2pt]\midrule
\multicolumn{7}{l}{\textit{Abeer}} \\
\quad 0 & 0.486\,\sd{0.065} & 0.348\,\sd{0.056} & 0.780\,\sd{0.011} & 0.519\,\sd{0.098} & 0.372\,\sd{0.177} & 0.517\,\sd{0.004} \\
\quad 1 & 0.600\,\sd{0.029} & 0.443\,\sd{0.028} & 0.531\,\sd{0.021} & 0.226\,\sd{0.011} & 0.088\,\sd{0.076} & 0.144\,\sd{0.020} \\
\quad 2 & 0.633\,\sd{0.025} & 0.476\,\sd{0.016} & 0.542\,\sd{0.008} & 0.214\,\sd{0.005} & 0.072\,\sd{0.028} & 0.137\,\sd{0.022} \\
\quad 3 & 0.618\,\sd{0.026} & 0.485\,\sd{0.025} & 0.563\,\sd{0.015} & 0.215\,\sd{0.017} & 0.062\,\sd{0.034} & 0.142\,\sd{0.051} \\
\quad 4 & 0.580\,\sd{0.012} & 0.421\,\sd{0.023} & 0.570\,\sd{0.007} & 0.280\,\sd{0.024} & 0.132\,\sd{0.019} & 0.240\,\sd{0.040} \\
\cmidrule(l){1-7}
\quad \textit{mean} & 0.583\,\sd{0.023} & 0.434\,\sd{0.017} & 0.597\,\sd{0.006} & 0.291\,\sd{0.021} & 0.145\,\sd{0.024} & 0.236\,\sd{0.019} \\
\addlinespace[2pt]\midrule
\multicolumn{7}{l}{\textit{TRAP}} \\
\quad 0 & 0.578\,\sd{0.028} & 0.409\,\sd{0.032} & 0.460\,\sd{0.069} & 0.147\,\sd{0.017} & 0.000\,\sd{0.000} & 0.058\,\sd{0.031} \\
\quad 1 & 0.635\,\sd{0.068} & 0.498\,\sd{0.099} & 0.430\,\sd{0.056} & 0.159\,\sd{0.021} & 0.003\,\sd{0.013} & 0.068\,\sd{0.039} \\
\quad 2 & 0.692\,\sd{0.022} & 0.573\,\sd{0.012} & 0.521\,\sd{0.011} & 0.197\,\sd{0.006} & 0.045\,\sd{0.007} & 0.104\,\sd{0.013} \\
\quad 3 & 0.671\,\sd{0.018} & 0.549\,\sd{0.016} & 0.526\,\sd{0.002} & 0.191\,\sd{0.003} & 0.046\,\sd{0.018} & 0.111\,\sd{0.010} \\
\quad 4 & 0.572\,\sd{0.019} & 0.401\,\sd{0.024} & 0.496\,\sd{0.025} & 0.167\,\sd{0.016} & 0.000\,\sd{0.000} & 0.043\,\sd{0.026} \\
\cmidrule(l){1-7}
\quad \textit{mean} & 0.630\,\sd{0.008} & 0.486\,\sd{0.011} & 0.487\,\sd{0.009} & 0.172\,\sd{0.004} & 0.019\,\sd{0.006} & 0.077\,\sd{0.006} \\
\bottomrule
\end{tabular}
\end{table}

\begin{table}[H]
\centering
\footnotesize\setlength{\tabcolsep}{2.2pt}
\caption{\textbf{Per-fold results, structures from Boltz-2}. Conventions as in \cref{tab:perfold-tfold}.}
\label{tab:perfold-boltz}
\begin{tabular}{lcccccc}
\toprule
method / fold & macro-AUROC & macro-AUPRC & micro-AUROC & micro-AUPRC & MCC@0.5 & MCC$^{\ast}$ \\
\midrule
\multicolumn{7}{l}{\textit{\syb}} \\
\quad 0 & 0.748\,\sd{0.014} & 0.572\,\sd{0.021} & 0.861\,\sd{0.006} & 0.671\,\sd{0.021} & 0.627\,\sd{0.022} & 0.666\,\sd{0.034} \\
\quad 1 & 0.770\,\sd{0.013} & 0.598\,\sd{0.022} & 0.628\,\sd{0.039} & 0.316\,\sd{0.013} & 0.173\,\sd{0.033} & 0.225\,\sd{0.001} \\
\quad 2 & 0.770\,\sd{0.010} & 0.586\,\sd{0.010} & 0.591\,\sd{0.009} & 0.285\,\sd{0.012} & 0.182\,\sd{0.040} & 0.219\,\sd{0.021} \\
\quad 3 & 0.751\,\sd{0.015} & 0.598\,\sd{0.012} & 0.603\,\sd{0.008} & 0.275\,\sd{0.021} & 0.173\,\sd{0.028} & 0.189\,\sd{0.027} \\
\quad 4 & 0.727\,\sd{0.011} & 0.538\,\sd{0.006} & 0.623\,\sd{0.006} & 0.311\,\sd{0.006} & 0.236\,\sd{0.028} & 0.256\,\sd{0.036} \\
\cmidrule(l){1-7}
\quad \textit{mean} & 0.753\,\sd{0.004} & 0.579\,\sd{0.007} & 0.661\,\sd{0.006} & 0.372\,\sd{0.003} & 0.278\,\sd{0.013} & 0.311\,\sd{0.012} \\
\addlinespace[2pt]\midrule
\multicolumn{7}{l}{\textit{STAG-LLM}} \\
\quad 0 & 0.696\,\sd{0.007} & 0.483\,\sd{0.005} & 0.811\,\sd{0.028} & 0.460\,\sd{0.097} & 0.342\,\sd{0.022} & 0.488\,\sd{0.093} \\
\quad 1 & 0.741\,\sd{0.011} & 0.544\,\sd{0.012} & 0.613\,\sd{0.011} & 0.304\,\sd{0.016} & 0.135\,\sd{0.038} & 0.206\,\sd{0.014} \\
\quad 2 & 0.734\,\sd{0.007} & 0.525\,\sd{0.022} & 0.607\,\sd{0.007} & 0.243\,\sd{0.002} & 0.114\,\sd{0.012} & 0.141\,\sd{0.015} \\
\quad 3 & 0.718\,\sd{0.014} & 0.534\,\sd{0.005} & 0.633\,\sd{0.003} & 0.274\,\sd{0.020} & 0.121\,\sd{0.012} & 0.202\,\sd{0.005} \\
\quad 4 & 0.661\,\sd{0.023} & 0.441\,\sd{0.027} & 0.614\,\sd{0.017} & 0.290\,\sd{0.013} & 0.131\,\sd{0.014} & 0.231\,\sd{0.017} \\
\cmidrule(l){1-7}
\quad \textit{mean} & 0.710\,\sd{0.002} & 0.505\,\sd{0.003} & 0.655\,\sd{0.008} & 0.314\,\sd{0.026} & 0.169\,\sd{0.012} & 0.254\,\sd{0.020} \\
\addlinespace[2pt]\midrule
\multicolumn{7}{l}{\textit{SageTCR}} \\
\quad 0 & 0.482\,\sd{0.016} & 0.226\,\sd{0.064} & 0.452\,\sd{0.043} & 0.140\,\sd{0.015} & 0.000\,\sd{0.000} & 0.033\,\sd{0.017} \\
\quad 1 & 0.611\,\sd{0.010} & 0.457\,\sd{0.028} & 0.475\,\sd{0.029} & 0.229\,\sd{0.030} & 0.155\,\sd{0.054} & 0.172\,\sd{0.034} \\
\quad 2 & 0.659\,\sd{0.029} & 0.503\,\sd{0.024} & 0.554\,\sd{0.003} & 0.241\,\sd{0.004} & 0.134\,\sd{0.038} & 0.171\,\sd{0.011} \\
\quad 3 & 0.644\,\sd{0.035} & 0.508\,\sd{0.024} & 0.578\,\sd{0.009} & 0.237\,\sd{0.009} & 0.109\,\sd{0.007} & 0.148\,\sd{0.014} \\
\quad 4 & 0.591\,\sd{0.019} & 0.416\,\sd{0.017} & 0.575\,\sd{0.023} & 0.272\,\sd{0.014} & 0.177\,\sd{0.008} & 0.205\,\sd{0.013} \\
\cmidrule(l){1-7}
\quad \textit{mean} & 0.597\,\sd{0.001} & 0.422\,\sd{0.013} & 0.527\,\sd{0.012} & 0.224\,\sd{0.009} & 0.115\,\sd{0.003} & 0.146\,\sd{0.004} \\
\addlinespace[2pt]\midrule
\multicolumn{7}{l}{\textit{Abeer}} \\
\quad 0 & 0.516\,\sd{0.026} & 0.357\,\sd{0.021} & 0.753\,\sd{0.033} & 0.409\,\sd{0.148} & 0.086\,\sd{0.153} & 0.387\,\sd{0.123} \\
\quad 1 & 0.621\,\sd{0.017} & 0.460\,\sd{0.015} & 0.488\,\sd{0.017} & 0.217\,\sd{0.014} & 0.010\,\sd{0.079} & 0.150\,\sd{0.027} \\
\quad 2 & 0.648\,\sd{0.014} & 0.472\,\sd{0.012} & 0.538\,\sd{0.018} & 0.202\,\sd{0.014} & 0.049\,\sd{0.045} & 0.101\,\sd{0.027} \\
\quad 3 & 0.624\,\sd{0.054} & 0.465\,\sd{0.048} & 0.562\,\sd{0.035} & 0.208\,\sd{0.029} & 0.065\,\sd{0.049} & 0.129\,\sd{0.040} \\
\quad 4 & 0.594\,\sd{0.017} & 0.432\,\sd{0.018} & 0.558\,\sd{0.013} & 0.238\,\sd{0.006} & 0.062\,\sd{0.053} & 0.172\,\sd{0.019} \\
\cmidrule(l){1-7}
\quad \textit{mean} & 0.600\,\sd{0.003} & 0.437\,\sd{0.009} & 0.580\,\sd{0.014} & 0.255\,\sd{0.032} & 0.054\,\sd{0.026} & 0.188\,\sd{0.023} \\
\addlinespace[2pt]\midrule
\multicolumn{7}{l}{\textit{TRAP}} \\
\quad 0 & 0.552\,\sd{0.038} & 0.395\,\sd{0.039} & 0.461\,\sd{0.071} & 0.146\,\sd{0.019} & 0.000\,\sd{0.000} & 0.054\,\sd{0.034} \\
\quad 1 & 0.686\,\sd{0.029} & 0.558\,\sd{0.027} & 0.472\,\sd{0.031} & 0.185\,\sd{0.008} & 0.009\,\sd{0.028} & 0.109\,\sd{0.009} \\
\quad 2 & 0.681\,\sd{0.008} & 0.540\,\sd{0.019} & 0.521\,\sd{0.002} & 0.188\,\sd{0.001} & 0.019\,\sd{0.008} & 0.090\,\sd{0.006} \\
\quad 3 & 0.672\,\sd{0.075} & 0.544\,\sd{0.083} & 0.501\,\sd{0.031} & 0.169\,\sd{0.022} & 0.024\,\sd{0.022} & 0.075\,\sd{0.054} \\
\quad 4 & 0.622\,\sd{0.053} & 0.473\,\sd{0.088} & 0.502\,\sd{0.002} & 0.186\,\sd{0.015} & 0.030\,\sd{0.053} & 0.084\,\sd{0.042} \\
\cmidrule(l){1-7}
\quad \textit{mean} & 0.642\,\sd{0.015} & 0.502\,\sd{0.006} & 0.491\,\sd{0.011} & 0.175\,\sd{0.002} & 0.016\,\sd{0.013} & 0.082\,\sd{0.009} \\
\bottomrule
\end{tabular}
\end{table}

\begin{table}[H]
\centering
\footnotesize\setlength{\tabcolsep}{2.2pt}
\caption{\textbf{Per-fold results, sequence baselines}. These methods read no structure, so they are listed once. Each cell is the mean over three training seeds, $\pm$ the spread across those seeds.}
\label{tab:perfold-seq}
\begin{tabular}{lcccccc}
\toprule
method / fold & macro-AUROC & macro-AUPRC & micro-AUROC & micro-AUPRC & MCC@0.5 & MCC$^{\ast}$ \\
\midrule
\multicolumn{7}{l}{\textit{NetTCR-2.2}} \\
\quad 0 & 0.676\,\sd{0.011} & 0.530\,\sd{0.006} & 0.806\,\sd{0.021} & 0.541\,\sd{0.063} & 0.453\,\sd{0.128} & 0.519\,\sd{0.083} \\
\quad 1 & 0.699\,\sd{0.024} & 0.519\,\sd{0.029} & 0.478\,\sd{0.009} & 0.230\,\sd{0.008} & 0.093\,\sd{0.005} & 0.161\,\sd{0.008} \\
\quad 2 & 0.702\,\sd{0.003} & 0.550\,\sd{0.007} & 0.541\,\sd{0.003} & 0.231\,\sd{0.008} & 0.120\,\sd{0.027} & 0.162\,\sd{0.018} \\
\quad 3 & 0.689\,\sd{0.024} & 0.549\,\sd{0.036} & 0.565\,\sd{0.006} & 0.222\,\sd{0.009} & 0.093\,\sd{0.018} & 0.145\,\sd{0.038} \\
\quad 4 & 0.676\,\sd{0.021} & 0.530\,\sd{0.013} & 0.631\,\sd{0.047} & 0.370\,\sd{0.016} & 0.276\,\sd{0.008} & 0.299\,\sd{0.014} \\
\cmidrule(l){1-7}
\quad \textit{mean} & 0.688\,\sd{0.007} & 0.536\,\sd{0.012} & 0.604\,\sd{0.010} & 0.319\,\sd{0.009} & 0.207\,\sd{0.026} & 0.257\,\sd{0.006} \\
\addlinespace[2pt]\midrule
\multicolumn{7}{l}{\textit{ERGO-II}} \\
\quad 0 & 0.658\,\sd{0.034} & 0.489\,\sd{0.027} & 0.755\,\sd{0.041} & 0.428\,\sd{0.082} & 0.219\,\sd{0.085} & 0.383\,\sd{0.105} \\
\quad 1 & 0.702\,\sd{0.007} & 0.540\,\sd{0.013} & 0.482\,\sd{0.005} & 0.234\,\sd{0.011} & -0.070\,\sd{0.006} & 0.175\,\sd{0.021} \\
\quad 2 & 0.706\,\sd{0.022} & 0.541\,\sd{0.033} & 0.555\,\sd{0.005} & 0.246\,\sd{0.015} & 0.039\,\sd{0.007} & 0.173\,\sd{0.015} \\
\quad 3 & 0.645\,\sd{0.020} & 0.480\,\sd{0.020} & 0.572\,\sd{0.003} & 0.229\,\sd{0.014} & 0.060\,\sd{0.003} & 0.165\,\sd{0.012} \\
\quad 4 & 0.631\,\sd{0.030} & 0.462\,\sd{0.027} & 0.599\,\sd{0.043} & 0.285\,\sd{0.045} & 0.054\,\sd{0.043} & 0.222\,\sd{0.061} \\
\cmidrule(l){1-7}
\quad \textit{mean} & 0.668\,\sd{0.016} & 0.502\,\sd{0.016} & 0.593\,\sd{0.009} & 0.284\,\sd{0.010} & 0.061\,\sd{0.026} & 0.224\,\sd{0.012} \\
\bottomrule
\end{tabular}
\end{table}

\subsection{Architecture Ablations}
\label{app:ablations}
\vspace{0.1in}

Two more ablations were run on the
same protocol and are reported here. Removing the graph but keeping the structure-derived node descriptors costs 13.1\% of the
micro-AUPRC margin the control holds above chance, which is close to what removing the structure
entirely costs, so most of what the structure contributes arrives through the neighborhood rather
than through the per-residue descriptors. Restoring the TCR $\alpha$ chain costs 8.6\% of that
margin and 10.0\% of MCC@0.5, which is the measurement behind the scope decision of
\cref{app:scope}.

\textbf{The docked-complex arm.} \emph{docked complex} replaces the two monomer structures with the
ESMFold2 co-folded complex of the same pair. The two halves already share one frame. The arm
concatenates them into one node set and runs a single encoder pass. The control runs one pass per
molecule. Cross-molecular edges appear on their own, because the $k$-NN graph is rebuilt on the
union at every layer. No cross-edge flag and no new feature is added. The five biochemical edge
terms read residue identity alone, so they were already defined across the boundary. The contact
head is unchanged. The complex cache keeps the node order of the monomer graph, so the arm reuses
the same language-model cache and the same CDR3$\beta$ and peptide row sets. Capacity is unchanged
as well. Both arms share one encoder between the two molecules, and the docked arm runs that same
encoder over the union. The two arms hold $3{,}027{,}721$ parameters under identical names and
shapes. Node features are recomputed from the complex instead of copied from the monomer graphs,
because a co-folded half differs from its separately folded monomer. Cross-molecular edges carry
distance and residue identity only, so the score stays invariant to a rigid motion of the pair. It
is not invariant to independent motions of the two halves. That second invariance is what
\cref{prop:pose} states and what this arm gives up.

\begin{table}[H]
\centering
\footnotesize\setlength{\tabcolsep}{4pt}
\caption{\textbf{The full ablation ladder, on the ESMFold2 structures.} The first four rows are
\cref{tab:ablation}, repeated so the two remaining arms can be read against them. \emph{no graph}
replaces the encoder with a simple multilayer perceptron, which removes the neighborhood and the coordinate
update while keeping the structure-derived node descriptors. \emph{alpha chain restored} puts the TCR $\alpha$ chain back into the
encoder input (\cref{app:scope}). Means over five folds and three seeds,
$\pm$ the spread across seeds.}
\label{tab:ablation-full}
\resizebox{\textwidth}{!}{%
\begin{tabular}{lcccccc}
\toprule
arm & macro-AUROC & macro-AUPRC & micro-AUROC & micro-AUPRC & MCC@0.5 & MCC$^{\ast}$ \\
\midrule
\sys (control) & 0.759\,\sd{0.004} & 0.584\,\sd{0.003} & 0.670\,\sd{0.005} & 0.372\,\sd{0.013} & 0.273\,\sd{0.029} & 0.304\,\sd{0.011} \\
\quad remove structure & 0.753\,\sd{0.008} & 0.570\,\sd{0.006} & 0.649\,\sd{0.003} & 0.349\,\sd{0.001} & 0.236\,\sd{0.018} & 0.283\,\sd{0.010} \\
\quad docked complex & 0.745\,\sd{0.009} & 0.578\,\sd{0.013} & 0.654\,\sd{0.009} & 0.346\,\sd{0.033} & 0.232\,\sd{0.063} & 0.274\,\sd{0.027} \\
\quad no peptide frame & 0.740\,\sd{0.008} & 0.558\,\sd{0.004} & 0.664\,\sd{0.007} & 0.360\,\sd{0.004} & 0.209\,\sd{0.017} & 0.279\,\sd{0.009} \\
\midrule
\quad no graph & 0.750\,\sd{0.002} & 0.561\,\sd{0.002} & 0.656\,\sd{0.006} & 0.344\,\sd{0.007} & 0.249\,\sd{0.012} & 0.275\,\sd{0.020} \\
\quad alpha chain restored & 0.739\,\sd{0.020} & 0.568\,\sd{0.026} & 0.643\,\sd{0.004} & 0.354\,\sd{0.014} & 0.246\,\sd{0.025} & 0.280\,\sd{0.012} \\
\bottomrule
\end{tabular}%
}
\end{table}

\subsection{Preprocessing Benchmark}
\label{app:cost}
\vspace{0.1in}

\textbf{What is timed.} The sample behind \cref{tab:cost} is fixed once and shared by all four
tools, because folding time scales with residue count. Residue medians are 228 for a receptor, 382
for a target and 610 for a complex. Each cell is GPU time for the inference call alone. Model load
is excluded and timed separately. The first structure of each run is excluded as well, since it pays
one-time kernel autotuning.

\textbf{Two rows carry a cost the table does not show.} TCRmodel2 spends 45 to 48 percent of its
wall clock outside inference, on container start, imports, parameter loading, template search and
featurization, so its $2.6\times$ is a ratio between two numbers that both carry that overhead.
Boltz-2 cannot fold without a multiple sequence alignment, and that alignment is not in its row.
Every chain carries an explicit alignment path and the 9-mer peptide carries none, which matches the
production configuration rather than the cheaper one. Timing it against a public alignment server
would have compared a different quantity to tFold-TCR and ESMFold2, which need no alignment at all.

\textbf{Limits.} Folding a complex is not expensive in itself, and on a fixed candidate set the
advantage is a constant factor rather than a change of regime. Each cell is one measurement rather
than a mean over repeats, so no error bar is claimed, though the within-cell spread is tight at
$p_{90}$ over median of 1.03 to 1.08.

\end{document}